\newtheorem{Thm}{Theorem}[section]
\newtheorem{Prop}{Proposition}[section]
\newcommand{\e}{\mathrm{e}}
\newcommand{\supp}{\mathrm{supp\,}}
\newcommand{\Expect}[1]{\left\langle{#1}\right\rangle}
\newcommand{\bvec}[1]{\boldsymbol{#1}}
\newcommand{\no}{\nonumber}
\newcommand{\Natural}{\mathbb{N}}
\newcommand{\Integer}{\mathbb{Z}}
\newcommand{\Real}{\mathbb{R}}
\newcommand{\Complex}{\mathbb{C}}
\newcommand{\bra}[1]{\langle #1 |}
\newcommand{\ket}[1]{| #1 \rangle}
\begin{document}

\begin{frontmatter}

%% Title, authors and addresses

%% use the tnoteref command within \title for footnotes;
%% use the tnotetext command for the associated footnote;
%% use the fnref command within \author or \address for footnotes;
%% use the fntext command for the associated footnote;
%% use the corref command within \author for corresponding author footnotes;
%% use the cortext command for the associated footnote;
%% use the ead command for the email address,
%% and the form \ead[url] for the home page:
%%
%% \title{Title\tnoteref{label1}}
%% \tnotetext[label1]{}
%% \author{Name\corref{cor1}\fnref{label2}}
%% \ead{email address}
%% \ead[url]{home page}
%% \fntext[label2]{}
%% \cortext[cor1]{}
%% \address{Address\fnref{label3}}
%% \fntext[label3]{}

\title{A Note on Reflection Positivity and
the Umezawa-Kamefuchi-K\"all\'en-Lehmann Representation of Two Point Correlation Functions}

%% use optional labels to link authors explicitly to addresses:
%% \author[label1,label2]{<author name>}
%% \address[label1]{<address>}
%% \address[label2]{<address>}

\author{Kouta Usui}

\address{Department of physics, the University of Tokyo, 113-0033, Japan\\
Institute for the Physics and Mathematics of the Universe (IPMU),   the University of Tokyo, Chiba 277-8568, Japan\\
Email : kouta@hep-th.phys.s.u-tokyo.ac.jp}

\begin{abstract}
%% Text of abstract
It will be proved that a model of lattice field theories which satisfies \textbf{(A1)} Hermiticity,
\textbf{(A2)} translational invariance, \textbf{(A3)} reflection positivity,
and \textbf{(A4)} polynomial boundedness of correlations, 
permits the Umezawa-Kamefuchi-K\"all\'en-Lehmann representation of two point correlation functions
with a positive spectral density function.
Then, we will also argue that positivity of spectral density functions is necessary for a lattice theory 
to satisfy conditions \textbf{(A1)} -\textbf{(A4)}. As an example,
a lattice overlap scalar boson model will be discussed.
We will find that the overlap scalar boson violates the reflection positivity.
\end{abstract}

\begin{keyword}
%% keywords here, in the form: keyword \sep keyword

%% MSC codes here, in the form: \MSC code \sep code
%% or \MSC[2008] code \sep code (2000 is the default)

Lattice field theory
\sep Reflection positivity
\sep Umezawa-Kamefuchi-K\"all\'en-Lehmann reparesentation
\end{keyword}

\end{frontmatter}

%%
%% Start line numbering here if you want
%%
% \linenumbers

%% main text
\section{Introduction}
Lattice regularization \cite{Wilson:1974sk} of quantum field theories gives us influential tools 
to analyze the theory on a non perturbative level. It provides 
a mathematically rigorous scheme as well as enables us
to perform numerical simulations, which is really  
powerful when one tackles problems concerning Quantum Chromodynamics (QCD),
the most promising physical model which is expected to describe the strong interaction of elementary 
particles.

However, it is far from trivial whether a lattice scheme indeed defines
a physically acceptable quantum mechanical model. 
M. L\"uscher constructed, starting from Wilson's lattice QCD, a Hilbert space of 
state vectors, and a self-adjoint Hamiltonian operator \cite{Luscher:1976ms}.
This construction is physically natural and concrete, but seems to 
crucially rely on the nearest-neighbor property of the lattice action.
K. Osterwalder and E. Seiler proved \cite{Osterwalder:1977pc} that Wilson's lattice QCD model
fulfills Osterwalder-Schrader's reflection positivity condition \cite{Osterwalder:1973dx,
Osterwalder:1974tc}, and also that form this condition
a Hibert space of quantum mechanical state vectors,
and positive Hamiltonian operator can be reconstructed. 
Their construction is more abstract than Lushcer's
but seems to be applicable to larger class of lattice models,
which may contain non-nearest-neighbor interactions.

Thus, when one considers a lattice model containing
non-nearest-neighbor interactions, one should rely on 
 Osterwalder-Seiler's reconstruction procedure.
In this case, it is an important issue to prove the reflection positivity condition
in order for the lattice model to be ensured to define a quantum mechanical system.
 But, to prove that a concrete lattice model indeed satisfies the reflection 
positivity is not a very trivial problem especially when the lattice action 
contains infinite-range interaction (cf. Ref. \cite{Kikukawa:2010gq}). 
And, such action is really needed when one wants to have the exact chiral symmetry
on the lattice. In fact, the exact chiral symmetry is realized by adopting 
Neuberger's overlap Dirac 
operator \cite{Neuberger:1997fp,Neuberger:1998wv,Luscher:1998pqa}, 
a gauge covariant solution of the Ginsparg-Wilson relation \cite{Ginsparg:1981bj}, and this Dirac 
operator is not finite-range.

Recently, Y. Kikukawa and the author discussed \cite{Kikukawa:2010zy} the $\mathcal{N}=1$ lattice Wess-Zumino model \cite{Wess:1973kz}
formulated through the overlap Dirac operator \cite{Fujikawa:2001ka,Fujikawa:2001ns,Bonini:2004pm,Kikukawa:2004dd,
Bonini:2005qx,Chen:2010uca,Aoyama:1998in}
and pointed out that this model violates the reflection positivity condition.
The reflection positivity condition of this model is violated by the bosonic part, the 
lattice overlap boson. It was shown there that, for the overlap boson, the way to prove the reflection positivity, which
they adopted to prove the reflection positivity of the overlap fermion, 
does not work and that the spectral density function in the Euclidean Umezawa-Kamefuchi-K\"all\'en-Lehmann representation 
\cite{Umezawa:1951rp,Kallen:1952zz,Lehmann:1954xx,Luscher:2000hn}
of the two point correlation function is not
positive. Thus, they concluded that this model does not satisfy the reflection positivity condition.

However, a rigorous proof of the statement that a lattice model with a spectral density
function which is not positive violates the reflection positivity was not given there.
It was assumed there that a lattice model fulfilling the reflection positivity condition permits the
following formal computations :
\begin{align}
\Expect{\phi(x)^*\phi(0)}&=\bra\Omega \hat\phi(x)^\dagger \hat\phi(0) \ket\Omega\no \\
	&=\bra\Omega \{\e^{-Ht-i\bvec P\cdot \bvec x}\hat\phi(0) \e^{Ht+i\bvec P\cdot \bvec x} \}^\dagger\hat\phi(0) \ket\Omega\no \\
	&=\bra\Omega \e^{Ht-i\bvec P\cdot \bvec x}\hat\phi(0)^\dagger \e^{-Ht-i\bvec P\cdot \bvec x} \hat\phi(0) \ket\Omega \no\\
	&=\bra\Omega \hat\phi(0)^\dagger \e^{-Ht-i\bvec P\cdot \bvec x} \hat\phi(0) \ket\Omega \no\\
	&=\int \frac{d\lambda}{\pi} \frac{d\bvec p}{(2\pi)^{d-1}}\,\e^{-\lambda t-i\bvec p\cdot \bvec x}
	\bra\Omega \hat\phi(0)^\dagger \ket{\lambda, \bvec p} \bra{\lambda,\bvec p}\hat\phi(0) \ket\Omega \no\\
	&=\int \frac{d\lambda}{\pi} \frac{d\bvec p}{(2\pi)^{d-1}}\,\e^{-\lambda t-i\bvec p\cdot \bvec x}
	| \bra{\lambda,\bvec p}\hat\phi(0) \ket\Omega |^2\no\\
	&=\int \frac{d\lambda}{\pi} \frac{d\bvec p}{(2\pi)^{d-1}}\,\e^{-\lambda t-i\bvec p\cdot \bvec x}\rho(\lambda,\bvec p).
\end{align} 
to conclude that
reflection positive
models must have the positive spectral density function $\rho(\lambda,\bvec p)=
| \bra{\lambda,\bvec p}\hat\phi(0) \ket\Omega |^2$. 

In this paper, we will give a mathematically rigorous proof of this statement in a self-contained manner.
Assuming that a lattice model of complex scalar field satisfies \textbf{(A1)} Hermiticity, \textbf{(A2)}
translational invariance, \textbf{(A3)} reflection positivity, and \textbf{(A4)} polynomial
boundedness of correlations, we prove that such a lattice model permits the
Euclidean version of Umezawa-Kamefuchi-K\"all\'en-Lehmann representation with a positive spectral
density function. Furthermore, we will point out that positivity of the spectral
density function is necessary for a lattice model to satisfy 
the assumptions \textbf{(A1)}-\textbf{(A4)}. As an application, we will 
prove that the lattice overlap scalar boson violates the reflection positivity
condition by showing that the spectral density function is not positive. 
Therefore, it is somewhat doubtful whether this model really defines
a quantum mechanical model, at least if the lattice spacing is kept non-zero. 

This paper is organized as follows. In section \ref{sec: General setup},
we will introduce several definitions and assumptions \textbf{(A1)}-\textbf{(A4)}
of a generic lattice scalar field model which will be discussed in
the following sections. It will be stated as Theorem \ref{main thm} that a two point
correlation function of a lattice field theory on $\Integer^d$ satisfying 
these assumptions is a Fourier 
(or Laplace, in the time direction) transformation of some positive measure
supported on $[0,\infty)\times[-\pi,\pi]^{d-1}$. This expression is regarded 
as a Euclidean version of the Umezawa-Kamefuchi-K\"all\'en-Lehmann representation, and
from the theorem, the spectral density function is proved to be positive.

In section \ref{sec: lattice QM}, we will review in detail how to reconstruct 
a quantum mechanical system from a lattice field theory satisfying 
assumptions \textbf{(A1)}-\textbf{(A4)}.
These quantum mechanical ingredients will play an essential role in the proof of Theorem \ref{main thm}. 

In section \ref{sec: proof}, Theorem \ref{main thm}
will be proved as a simple corollary of this reconstruction
procedure, and Euclidean version of Umezawa-Kamefuchi-K\"all\'en-Lehmann representation will be discussed.
It will be pointed out there, a lattice model with a non-positive spectral density 
violates at least one of our assumptions \textbf{(A1)}-\textbf{(A4)}.

In section \ref{sec: overlap boson}, the above general discussion will be applied
to the overlap boson. The lattice overlap boson field in the infinite volume lattice will be defined as a
Gaussian random process characterized by the Klein-Gordon type
bosonic overlap operator. After proving that the overlap boson fulfills
all of our assumptions except the reflection positivity condition,
an explicit formula of the two point function
will be presented, which shows that the spectral density is \textit{not}
non-negative.
Finally, this proves that the lattice overlap boson system never satisfies the reflection
positivity condition \textbf{(A3)}. 

\section{General setup}\label{sec: General setup}
We give a basic setup of a lattice field theory considered in the
following. 
Here, we deal with a generic complex scalar field theory on the lattice.
For simplicity, we always set the lattice spacing to be unity. Let $d$ be space-time dimension.
For $x\in\Integer^d$, we denote its components by
\begin{align}
x=(x_0,x_1,\dots,x_{d-1}),\quad x_\mu\in\Integer,\,\mu=0,1,\dots,d-1.
\end{align}
The $0$-th direction is called the time direction while
$k$-th directions with $k=1,2,\dots,d-1$ are called spacial directions.

For each $x\in\Integer^d$, there is a field algebra $\mathfrak{A}_x$, all the
polynomials generated by 
lattice fields $\{\phi(x),\phi(x)^*,1\}$. Let 
$\Lambda\subset\Integer^d$ be a finite subset of $\Integer^d$, and define a tensor product
\begin{align}
\mathfrak{A}_\Lambda:=\bigotimes_{x\in\Lambda}\mathfrak{A}_x,
\end{align}
which is all the polynomials of fields living in $\Lambda$, $\{ \phi(x),\phi(x)^*,1\}_{x\in\Lambda}$. 
The local field algebra $\mathfrak{A}_L$
is defined as
\begin{align}
\mathfrak{A}_L:=\bigcup_{\Lambda\subset\subset\Integer}\mathfrak{A}_\Lambda,
\end{align}
where $\Lambda\subset\subset\Integer$ means $\Lambda$ is a finite subset of $\Integer^d$.
The translation group $\Integer^d$ naturally acts on $\mathfrak{A}_L$, as automorphisms on $\mathfrak{A}_L$
\begin{align}
\Integer^d\ni y \mapsto \tau_y \in \mathrm{Aut}\, \mathfrak{A}_L , 
\end{align}
which is given by, for generators of $\mathfrak{A}_L$,
\begin{align}
\tau_y \phi(x)=\phi(x+y),\quad \tau_y\phi(x)^*=\phi(x+y)^*,\quad \tau_y(1)=1.
\end{align}
This clearly satisfies
\begin{align}
\tau_y \mathfrak{A}_x &= \mathfrak{A}_{x+y} \no\\
\tau_y \mathfrak{A}_\Lambda &= \mathfrak{A}_{\Lambda+y}.
\end{align}
The expectation value is a linear map
\begin{align}
\Expect{\cdot} :\mathfrak{A}_L \to \Complex,
\end{align}
satisfying the normalization condition
\begin{align}
\Expect 1 =1.
\end{align}
We call $\Expect{a}$ ``the expectation value of $a$" for $a\in\mathfrak{A}_L$. A lattice field theory is 
characterized by the pair $(\mathfrak{A}_L,\Expect{\cdot})$.

Next, we introduce operations on $\mathfrak{A}_L$, the reflection operator $\theta_l$ and $\theta_s$.
Consider the following two types of time reflections:
\begin{align}
(t,\bvec x)&\mapsto (-t+1, \bvec x),\\
(t,\bvec x)&\mapsto (-t, \bvec x).
\end{align}
The former is called link reflection and the latter site reflection. Clearly, link reflection is reflection 
with respect to the hyper plane $\{(t,\bvec x) \in\Integer^d\,;\, t=1/2\}$, and site reflection is to 
$\{(t,\bvec x) \in\Integer^d\,;\, t=0\}$. Corresponding to these two reflections, we define 
two operators $\theta_l$ and $\theta_s$ as follows. For the generators,
we define 
\begin{align}
\theta_l\phi(t,\bvec x)&:=\phi(-t+1,\bvec x)^*, \label{link ref}\\
\theta_s\phi(t,\bvec x)&:=\phi(-t,\bvec x)^* ,\label{site ref}
\end{align}
and $\theta_l1=\theta_s1=1$.
For general elements of $\mathfrak{A}_L$, we extend it by the relations
\begin{align}
\theta_\#(\alpha a + \beta b)&=\bar{\alpha}\theta_\#(a)+\bar{\beta}\theta_\#(b),\\
\theta_\#(ab)&=\theta_\#(b)\theta_\#(a),
\end{align}
for $\alpha,\beta\in\Complex$, $a,b\in\mathfrak{A}_L$ and $\#=l$ or $\#=s$. 
$\bar\alpha$ denotes complex conjugate of $\alpha \in \Complex$. 
Let $\mathfrak{A}^\#_\pm\subset\mathfrak{A}_L$ be a subalgebra which consists of all the polynomials 
generated by ``positive (resp. negative) time" field generators $\{\phi(t,\bvec x), \phi(t,\bvec x)^* \}$ and $1$. For
instance, $\mathfrak{A}_+^l$ is defined as
\begin{align}
\mathfrak{A}_+^l := \bigcup_{\Lambda\subset\subset\Integer^d_+}\mathfrak{A}_\Lambda,\quad
\Integer^d_+:=\{(t,\bvec x)\in\Integer^d\,:\, t\ge 1 \},
\end{align}
and similarly for other cases.
It is clear by the definition that $\theta_\#^2=1$ and $\theta_\#\mathfrak{A}_\pm^\#=\mathfrak{A}_\mp^\#$ for
$\#=l,s$. Since we mainly consider the link reflection in the following
discussion, we omit the subscript $l$
to mean ``link" , and just write 
$\theta:=\theta_l$, $\mathfrak{A}_+:=\mathfrak{A}_+^l$ and so forth. 

In the subsequent analyses, we consider lattice field theories 
$(\mathfrak{A}_L,\Expect\cdot)$
which fulfill the following assumptions \textbf{(A1)} - \textbf{(A4)}.
\begin{itemize}
\item[\textbf{(A1)}] Hermiticity :
\begin{align}
\Expect{\theta(a)}=\overline{\Expect{a}},\quad \forall a\in\mathfrak{A}_L.
\end{align}
\item[\textbf{(A2)}] Translational invariance of the expectation value :
\begin{align}
\Expect{\tau_x(a)}=\Expect{a},\quad \forall a\in\mathfrak{A}_L,\,\forall x\in\Integer^d.
\end{align} 
\item[\textbf{(A3)}] The link reflection positivity condition :
\begin{align}
\Expect{\theta(a)\,a} \ge 0 ,\quad \forall a\in\mathfrak{A}_+.
\end{align}
\item[\textbf{(A4)}] Polynomial boundedness of the correlation functions : For all $a,b\in\mathfrak{A}_L$, 
there exists some constant $C_{a,b}>0$ and $n\in\Natural$ such that
\begin{align}
|\Expect {b\,\tau_x (a)}|\le C_{a,b} (1+|x|^n),\quad x\in\Integer^d.
\end{align}
\end{itemize} 

If we assume in addition the following site reflection positivity condition, more strong result will be 
obtained :
\begin{itemize}
\item[\textbf{(A3S)}]  The site reflection positivity condition :
\begin{align}
\Expect{\theta_s(a)\,a} \ge 0 ,\quad \forall a\in\mathfrak{A}_+^s.
\end{align}
\end{itemize}
But, we will not assume this condition \textbf{(A3S)} in general except in subsection \ref{site}.
We stress that in order to reconstruct a Hilbert space of state vectors, and Hamiltonian and momentum
operators, \textbf{(A3S)} is not necessary. 

From the above assumptions \textbf{(A1)}-\textbf{(A4)},
a quantum mechanical system with a Hamiltonian and spacial momentum operators
can be reconstructed. 
Furthermore, our assumptions \textbf{(A1)}-\textbf{(A4)} are sufficient to
ensure that the two point function permits Euclidean version of Umezawa-Kamefuchi-K\"all\'en-Lehmann representation
with positive spectral density. More strictly, we can prove :

\begin{Thm}\label{main thm}
Suppose a lattice theory $(\mathfrak{A}_L,\Expect\cdot)$ satisfies
 \textbf{(A1)}-\textbf{(A4)}.  Then,
its Euclidean two point Green function $\Expect{\phi(x)^*\phi(0)}$ with $x_0=2m+1$,
$m=0,1,2,\dots$ is a Fourier (or 
Laplace, in the time direction) transformation
of some bounded Borel measure supported on $[0,\infty)\times [-\pi,\pi]^{d-1}$,
that is, there exists a $d$-dimensional bounded Borel measure $\rho$ with $\supp\rho\subset
[0,\infty)\times [-\pi,\pi]^{d-1}$ such that 
\begin{align}\label{eq: KL rep of prop}
\Expect{\phi(x)^*\phi(0)}=\int_{[0,\infty)\times [-\pi,\pi]^{d-1}} \,e^{-2m\lambda}{\rm e}^{-i\bvec p\cdot\bvec x}\,
d\rho(\lambda,\bvec p), \quad x_0=2m+1,\quad m=0,1,2,\dots.
\end{align}
Furthermore, the measure $\rho$ is unique in the sense that if there is a bounded 
Borel measure $\sigma$ satisfying the same relation as \eqref{eq: KL rep of prop}, then
$\rho=\sigma$. 
\end{Thm}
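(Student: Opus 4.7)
The plan is to obtain the representation \eqref{eq: KL rep of prop} as an application of the joint spectral theorem to the commuting family of time and spatial translation operators on the physical Hilbert space reconstructed from $(\mathfrak{A}_+, \Expect{\cdot})$ via the Osterwalder--Schrader quotient procedure.

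First, I would use \textbf{(A1)} and \textbf{(A3)} to equip $\mathfrak{A}_+$ with the positive semi-definite Hermitian sesquilinear form $(a, b) := \Expect{\theta(a) b}$; quotienting by its null space and completing yields a Hilbert space $\mathcal{H}$ with canonical map $a \mapsto [a]$. Using $\theta \tau_{(1, \bvec 0)} = \tau_{(-1, \bvec 0)} \theta$ together with \textbf{(A2)}, one shows that the time shift $\tau_{(1, \bvec 0)}$ descends to a densely defined symmetric operator $\hat T$ on $\mathcal{H}$. Its extension to a self-adjoint contraction is the key step: iterating Cauchy--Schwarz yields $\|\hat T[a]\|^{2^n} \le \|[a]\|^{2^n - 1} \|\hat T^{2^n}[a]\|$, while \textbf{(A4)} ensures $\|\hat T^{2^n}[a]\|^2 = \Expect{\theta(a)\,\tau_{(2^{n+1}, \bvec 0)} a}$ grows only polynomially in $2^n$; letting $n \to \infty$ forces $\|\hat T[a]\| \le \|[a]\|$. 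Spatial translations descend to unitary operators $\hat U_{\bvec y}$, since $\tau_{(0, \bvec y)}$ commutes with $\theta$ and $\Expect{\cdot}$ is translation invariant.

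Since $\hat T$ and all $\hat U_{\bvec y}$ mutually commute, the joint spectral theorem supplies a projection-valued measure $E$ on $[-1, 1] \times [-\pi, \pi]^{d-1}$. Picking $a_0 := \phi(1, \bvec 0)^* \in \mathfrak{A}_+$, one checks $\theta(a_0) = \phi(0, \bvec 0)$ and $\tau_{(2m, \bvec x)}(a_0) = \phi(2m+1, \bvec x)^*$, so commutativity of $\mathfrak{A}_L$ gives
\[
\Expect{\phi(2m+1, \bvec x)^* \phi(0, \bvec 0)} = \Expect{\theta(a_0)\,\tau_{(2m, \bvec x)}(a_0)} = ([a_0], \hat T^{2m} \hat U_{\bvec x}[a_0]).
\]
By the spectral theorem this equals $\int t^{2m} e^{i \bvec p \cdot \bvec x}\, d\mu_{a_0}(t, \bvec p)$ for a bounded positive scalar spectral measure $\mu_{a_0}$ on $[-1, 1] \times [-\pi, \pi]^{d-1}$. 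Since $t^{2m} = (t^2)^m$ depends only on $t^2 \in [0, 1]$, the change of variable $t^2 = e^{-2\lambda}$ together with $\bvec p \mapsto -\bvec p$ pushes $\mu_{a_0}$ forward to a bounded positive Borel measure $\rho$ supported in $[0, \infty) \times [-\pi, \pi]^{d-1}$, yielding exactly \eqref{eq: KL rep of prop}.

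Uniqueness then follows from Stone--Weierstrass: the functions $\{e^{-2m\lambda} e^{-i \bvec p \cdot \bvec x}\}_{m \in \Natural, \bvec x \in \Integer^{d-1}}$ generate a uniformly dense $*$-subalgebra of $C([0, \infty] \times [-\pi, \pi]^{d-1})$, so any two bounded Borel measures yielding the same integrals against this family must coincide. The main obstacle will be the self-adjoint contraction claim for $\hat T$: well-definedness on the quotient requires a preparatory Cauchy--Schwarz argument showing $\tau_{(1, \bvec 0)}$ preserves the null space, while the bound $\|\hat T\| \le 1$ crucially intertwines all four assumptions simultaneously. One further subtlety is confirming that the pushforward measure carries no mass at $\lambda = \infty$, so that $\supp \rho$ lies in $[0, \infty) \times [-\pi, \pi]^{d-1}$ rather than the compactified product; this reduces to checking that no atom of $\mu_{a_0}$ sits at $t=0$, which would otherwise only affect the $m = 0$ slice of \eqref{eq: KL rep of prop}.
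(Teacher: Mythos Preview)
Your proposal is correct and follows essentially the same route as the paper: Osterwalder--Schrader reconstruction of $\mathcal{H}$, the iterated Cauchy--Schwarz plus \textbf{(A4)} argument to get $\|\hat T\|\le 1$, the identification $\Expect{\phi(x)^*\phi(0)}=\langle[\phi(1,\bvec 0)^*],\hat T^{2m}\hat U_{\bvec x}[\phi(1,\bvec 0)^*]\rangle$, and uniqueness via density of the span of $\{e^{-2m\lambda}e^{-i\bvec p\cdot\bvec x}\}$ in continuous functions on the compactified space. The only cosmetic difference is that the paper first defines $H=-\ln|T|$ on $(\ker T)^\perp$ and proves strong commutativity of $(H,\bvec P)$ before invoking the joint spectral measure, whereas you apply the joint spectral theorem directly to $(\hat T,\hat U)$ on $[-1,1]\times[-\pi,\pi]^{d-1}$ and then push forward via $t^2=e^{-2\lambda}$; your version is slightly more economical, and the $\ker T$ subtlety you flag at the end is precisely what the paper's passage to $(\ker T)^\perp$ is meant to address.
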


After we complete the proof of Theorem \ref{main thm},
it will become clear from the construction of the measure $\rho$ that it carries information of
the spectrum of quantum mechanical energy momentum operators $(H,\bvec P)$, and
we regard the expression of \eqref{eq: KL rep of prop} as the Euclidean version
of Umezawa-Kamefuchi-K\"all\'en-Lehmann representation of propagators. 
Apply the Lebesgue decomposition theorem to $\rho$ to obtain the decomposition
\begin{align}
\rho=\rho_a+\rho_s,
\end{align}
with $\rho_a$ absolutely continuous with respect to the Lebesgue measure $d\lambda\,d\bvec p$,
and $\rho_s$ singular to it. Let $\sigma$ be the Radon-Nikodym derivative of
$\rho_a$ :
\begin{align}
d\rho_a(\lambda,\bvec p)=\sigma(\lambda,\bvec p)\,\frac{d\lambda}{\pi}\frac{d\bvec p}{(2\pi)^{d-1}}.
\end{align}
We call $\sigma$ a spectral density function. 
By Theorem \ref{main thm}, if a lattice model 
$(\mathfrak{A}_L,\Expect\cdot)$ satisfies assumptions \textbf{(A1)} - \textbf{(A4)}, then
the spectral density of the model $\sigma$ has to be nonnegative at almost every $(\lambda,\bvec p)$
with respect to the Lebesgue measure.

\section{Construction of Hilbert space, Hamiltonian and Momentum Operators}\label{sec: lattice QM}
Before going to the proof of Theorem \ref{main thm},
we will review in detail how to reconstruct a quantum theory in a self-contained manner.
The discussion given in this section is mainly based on Refs. \cite{Ezawa1988xx,Seiler:1982pw,Osterwalder:1977pc}.

\subsection{Hilbert space of state vectors}
Let a lattice model $(\mathfrak{A}_L,\Expect\cdot)$ satisfy
the assumptions \textbf{(A1)}-\textbf{(A4)} 
given above. 
We emphasize that the site reflection positivity condition \textbf{(A3S)} is not assumed
here.
The reconstruction of a quantum theory can be performed 
without relying on the site reflection positivity \textbf{(A3S)}.
Remember again the subsprict $l$ for ``link" is omitted, for instance, 
$\theta:=\theta_l$, $\mathfrak{A}_+:=\mathfrak{A}_+^l$, and so forth. 

We will construct a Hilbert space of state vectors as follows.
Let us define a quadratic form on $\mathfrak{A}_+$
\begin{align}
(\cdot,\cdot)_+:\mathfrak{A}_+\times \mathfrak{A}_+\to \Complex
\end{align}
by
\begin{align}
(a,b)_+:=\Expect{\theta(a)b}\quad a,b\in\mathfrak{A}_+.
\end{align}
By virtue of the link reflection positivity \textbf{(A3)}, $(\cdot,\cdot)_+$ defines
a positive semi-definite inner product on $\mathfrak{A}_+$. Then, we 
consider 
the quotient vector space $\mathfrak{A}_+/\mathcal{N}$, where
\begin{align}
\mathcal{N}:=\{ a\in\mathfrak{A}_+ \, ;\, (a,a)_+=0 \}
\end{align}  
is the subspace of null vectors. Let us denote the equivalent class of $a\in\mathfrak{A}_+$
by $[a]$. The linear operation in $\mathfrak{A}_+/\mathcal{N}$
is given by
\begin{align}
[a]+[b]&:=[a+b] \quad a,b\in\mathfrak{A}_+,\\
\alpha[a]&:=[\alpha a] \quad \alpha\in\Complex,\, a\in\mathfrak{A}_+,
\end{align}
and the inner product $\langle\cdot,\cdot\rangle$ on $\mathfrak{A}_+/\mathcal{N}$ is defined by
\begin{align}
\Expect{[a],[b]}:=(a,b)_+\quad a,b\in\mathfrak{A}_+.
\end{align}
It is straightforward to check that these are
well-defined and that with this inner product $\mathfrak{A}_+/\mathcal{N}$
becomes a pre-Hilbert space (i.e. a complex vector space with positive definite inner product).
We define a Hilbert space $\mathcal{K}$ 
to be the completion of $\mathfrak{A}_+/\mathcal{N}$. Note that
the original vector space $\mathcal{D}:=\mathfrak{A}_+/\mathcal{N}$ is 
embedded as a dense subspace in $\mathcal{K}$. In general, $\mathcal{K}$ may be too large, 
containing unphysical states with infinite energy.
The physical Hilbert space is a closed subspace of $\mathcal{K}$ consisting of all the state
vectors with finite energy, which will be defined
after we introduce the transfer matrix $T$.

\subsection{Hamiltonian and Momentum operators}\label{Hamiltonian and momentum ops}
We will define translation operators $U_\mu$ for
$\mu=0,1,2,\dots,d-1$ on $\mathcal{K}$, through the translation automorphisms $\{\tau_y\}_{y\in\Integer^d}$.
For elements of $\mathcal{D}$, we define $U_\mu$ by the relations 
\begin{align}\label{eq: definition of U_mu}
U_\mu[a]:=[\tau_{\mu} (a)],\quad a\in\mathfrak{A}_+,\,\mu=0,1,2,\dots,d-1,
\end{align}
where $\tau_\mu$ is the one-site translation in the $\mu$-th direction:
\begin{align}
\tau_\mu:=\tau_{e_\mu},
\end{align} 
with $e_\mu$ being a unit vector in the $\mu$-th direction.
For spacial directions $k=1,2,\dots,d-1$, $U_k$ is the unitary implementation of spacial
translations $\{\tau_{\bvec x}\}_{\bvec x\in\Integer^{d-1}}$ in $\mathcal{K}$. However, in the time
direction $\mu=0$, the ``$\theta$-reflected" inner product in $\mathcal{K}$ makes $U_0$ 
a self-adjoint operator instead of a unitary operator. This is because we are
working in the Euclidean space-time.

Let us show the operation of $U_\mu$ does not
depend on the choice of representatives, i.e. $\tau_\mu \mathcal{N}\subset\mathcal{N}$,
so that $U_\mu$'s are well-defined.

First, consider the spacial directions. Let $||a||_+$ be 
the square root of $(a,a)_+$ with $a\in\mathfrak{A}_+$.
Since $\tau_k$'s, $(k=1,2,\dots,d-1)$ commute with $\theta$ and $\Expect\cdot$
is translationally invariant by \textbf{(A2)}, we learn
\begin{align}
(\tau_k (a), \tau_k (b))_+=(a,b)_+,
\end{align}
implying 
\begin{align}\label{iso of tau}
||\tau_k (a)||_+=||a||_+,
\end{align}
and therefore $\tau_k\mathcal{N}\subset{N}$. This confirms
the well-definedness of $U_k$'s $(k=1,2,\dots,d-1)$, and we learn from \eqref{iso of tau}
\begin{align}
|| U_k[a]||_\mathcal{K}=|| [a] ||_\mathcal{K}\quad a\in\mathfrak{A}_+.
\end{align}
Thus, we conclude that the operator norm\footnote{
For a linear operator $A$ in a Hilbert space, its operator norm is defined
as
\[ ||A||:=\sup_{x\in D(A)}\frac{||Ax||}{||x||}, \]
where $D(A)$ is a domain of definition of $A$.
} of $U_k$, which we denote by $||U_k||$, 
satisfies $||U_k||=1$ and then $U_k$ has the unique unitary extension on $\mathcal{K}$. 
We denote this extended unitary operator by the same symbol $U_k$.

Next, we consider the time direction. This case needs more arguments \cite{Ezawa1988xx}. 
To see that $U_0$ is well-defined, 
put for $t=0,1,2,\dots$ and for some fixed $a\in\mathfrak{A}_+$,
\begin{align}
F(t):=(a,\tau_0^t (a))_+.
\end{align}
Noting that $\tau_0\circ\theta=\theta\circ \tau_0^{-1}$ on $\mathfrak{A}_+$ and using the translational
invariance of $\Expect\cdot$ \textbf{(A2)}, we obtain
\begin{align}\label{eq: self-adjointness of T_0}
(a,\tau_0 (b))_+&=\Expect{\theta(a)\,\tau_0(b)}\no\\
	&=\Expect{\tau_0^{-1}\circ\theta(a)\,(b)}\no\\
	&=\Expect{\theta\circ \tau_0(a)\,(b)}\no\\
	&=(\tau_0 (a), b)_+.
\end{align}
From the link reflection positivity condition \textbf{(A3)} and \eqref{eq: self-adjointness of T_0}, we obtain
\begin{align}\label{F positive even}
F(2t)=(a,\tau_0^{2t}(a))_+=||\tau_0^t (a) ||_+^2\ge 0,
\end{align}
and by the repeated use of \eqref{eq: self-adjointness of T_0} and  Schwarz's inequality for $(\cdot,\cdot)_+$, we also have
\begin{align}\label{eq: estimation of F(t)}
0\le F(2t)&\le || a ||_+ || \tau_0^{2t}(a) ||_+ \no\\
	&=|| a ||_+ (\tau_0^{2t} (a),\tau_0^{2t}(a))_+^{1/2} \no\\
	&=|| a ||_+ ( a,\tau_0^{2^2t}(a))_+^{1/2}\no\\
	&\le || a ||_+^{1+1/2} || \tau_0^{2^2t} (a) ||_+^{1/2}\no\\
	&=\dots \no\\
	&\le || a ||_+^{1+1/2 + (1/2)^2+\dots+(1/2)^n} ||\tau_0^{2^{n+1} t} (a)||_+^{(1/2)^n}\no\\
	&=|| a ||_+^{1+1/2 + (1/2)^2+\dots+(1/2)^n}F(2^{n+2}t)^{(1/2)^{n+1}}.
\end{align}
By the polynomial boundedness \textbf{(A4)}, there exists some constant $C_a>0$ and $m\in\Natural$
such that,
\begin{align}
F(t)\le C_a(1+|t|^m).
\end{align}
Hence, noting that
\begin{align}
\lim_{n\to\infty} F(2^{n+2}t)^{(1/2)^{n+1}}\le \lim_{n\to\infty}\{C_a(1+|2^{n+2}t|^m)\}^{(1/2)^{n+1}}=1,
\end{align}
and taking the limit $n\to\infty$ in \eqref{eq: estimation of F(t)}, we obtain
the estimation
\begin{align}
0\le F(2t) \le || a ||_+^2.
\end{align} 
In particular, if $a\in\mathcal{N}$, we learn
\begin{align}
|| \tau_0 (a) ||_+^2 = F(2)\le || a ||_+^2 =0,
\end{align}
showing that $\tau_0\mathcal{N}\subset\mathcal{N}$. Then, it is confirmed that
the definition of $U_0$ by 
\eqref{eq: definition of U_mu} makes sense. Further, one can see that $U_0$ is bounded because
the computation 
\begin{align}
|| U_0[a] ||_{\mathcal{K}} = || [\tau_0 (a)] ||_\mathcal{K} = || \tau_0(a) ||_+ \le || a ||_+ = || [a] ||_\mathcal{K}
\end{align}
shows that the operator norm of $U_0$ is less than $1$ :
\begin{align}
|| U_0 || \le 1.
\end{align}
Therefore, from the boundedness of $U_0$ and \eqref{eq: self-adjointness of T_0}, 
$U_0$ has the unique self-adjoint extension $T$ defined on $D(T)=\mathcal{K}$.
$T$ is called the transfer matrix.
Here, the operator domain of $A$ is denoted by $D(A)$.

We will construct the Hamiltonian and momentum operators from $T$ and $U_k$'s.
The desired relation between Hamiltonian and the transfer matrix $T$ is 
\begin{align}\label{one step trans}
T^t={\rm e}^{-tH}, \,t=0,1,2,\dots,
\end{align}
but this can not be satisfied in general, because $T$ is not always a nonnegative operator
and may have non-trivial kernel.
So, we proceed in the following way. First, consider
\begin{align}
\mathcal{H}:=(\ker T)^\perp.
\end{align}
We do not want state vectors in $\ker T$ to be contained in the physical Hilbert space
since these states possess infinite energy. Hence, it is reasonable to regard the Hilbert space $\mathcal{H}$
as a physical Hilbert space. Next, define Hamiltonian $H$ by
\begin{align}
H:=-\frac{1}{2}\ln (T|_\mathcal{H})^2=-\ln |\,T|_\mathcal{H}\, |.
\end{align}
From the functional calculus (see, for example, Theorem VIII.5 in Ref. \cite{Reed:1980}), 
$H$ is a densely defined, positive self-adjoint operator in $\mathcal{H}$.
From now on, we will denote the transfer matrix $T|_\mathcal{H}$ just by $T$.
Note that $H$ is unbounded in general, and is bounded if
and only if $0\not\in\sigma(T)$ and also that since $\sigma(T)\subset[0,1]$,
the spectrum of $H$ is contained in $[0,\infty)$. The important remark here is that
it is true that
\begin{align}
T^{t}=\e^{-t H},\quad t=0,2,4,\dots,
\end{align}
but \eqref{one step trans} is \textit{false} in general for odd $t$.

Spacial momentum operators $P_k\,(k=1,2,\dots,d-1)$'s are defined through the expected relations :
\begin{align}
U_k^{n}={\rm e}^{-in P_k}, \quad k=1,2,\dots,d-1,\quad n\in\Integer.
\end{align}
Here, $U_k$'s are considered to be unitary operators in $\mathcal{H}$.
Explicitly, we define $P_k$ as follows.
Let $\mathcal{B}^d$ be the $d$ dimensional 
Borel field. By the spectral theorem for the unitary operators, there exist a unique
one dimensional spectral measure $\{E_k(B)\}_{B\in\mathcal{B}^1}$ supported on $[-\pi,\pi]$
such that
\begin{align}
U_k = \int_{[-\pi,\pi]} {\rm e}^{-i\theta}\,dE_k(\theta).
\end{align}
Define
\begin{align}
P_k:=\int_\Real \theta\,dE_k(\theta)=\int_{[-\pi,\pi]} \theta\,dE_k(\theta),
\end{align}
and this $P_k$ is a bounded self adjoint operator in $\mathcal{H}$ satisfying
\begin{align}
{\rm e}^{-in P_k}=\int_{[-\pi,\pi]} {\rm e}^{-in \theta}\,dE_k(\theta)=U_k^{n},
\end{align}
for all $n\in\Integer$.

So far, we have constructed the energy-momentum operators $(H,\bvec P)$ and the
physical energy-momentum spectrum is considered to be the joint spectrum of
$(H,\bvec P)$.  But, the existence of the joint spectrum is ensured only when
they are strongly commuting (i.e. their associated spectral measures are commuting). 
Thus, we have to prove :

\begin{Thm}\label{strong com} 
Let $(H,\bvec P)$, with $\bvec P=(P_1,P_2,\dots,P_{d-1})$ be Hamiltonian and 
momentum operators defined above. Then,
$(H,\bvec P)$ are strongly commuting, that is, all the spectral projections are commuting with each other.
\end{Thm}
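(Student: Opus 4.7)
The plan is to reduce strong commutativity of $(H,\bvec P)$ to the ordinary commutativity of the bounded operators $T, U_1,\dots,U_{d-1}$ on $\mathcal{K}$, which itself follows from the abelianness of $\Integer^d$ acting on $\mathfrak{A}_L$. Specifically, on the dense subspace $\mathcal{D}=\mathfrak{A}_+/\mathcal{N}$, the definition \eqref{eq: definition of U_mu} together with $\tau_\mu\tau_\nu=\tau_\nu\tau_\mu$ yields
\begin{align}
U_\mu U_\nu[a]=[\tau_\mu\tau_\nu(a)]=[\tau_\nu\tau_\mu(a)]=U_\nu U_\mu[a],\quad a\in\mathfrak{A}_+,
\end{align}
for all $\mu,\nu\in\{0,1,\dots,d-1\}$. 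Since each $U_k$ is unitary on $\mathcal{K}$ and $U_0$ extends to the bounded self-adjoint transfer matrix $T$, this identity extends by continuity from $\mathcal{D}$ to all of $\mathcal{K}$, giving $TU_k=U_kT$ and $U_jU_k=U_kU_j$ as bounded operators on $\mathcal{K}$.

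Next, I would check that the physical subspace $\mathcal{H}=(\ker T)^\perp$ is $U_k$-invariant for each $k=1,\dots,d-1$. Indeed, $TU_k=U_kT$ implies $U_k(\ker T)\subset\ker T$, and since $U_k$ is unitary on $\mathcal{K}$ this inclusion is an equality; passing to orthogonal complements gives $U_k\mathcal{H}=\mathcal{H}$. Therefore $U_k|_\mathcal{H}$ is unitary on $\mathcal{H}$, and the spectral measure $E_k$ used to define $P_k$ is supported on $\mathcal{H}$, so that $(H, \bvec P)$ all live on the same Hilbert space $\mathcal{H}$.

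Finally, I would pass from bounded commutativity to strong commutativity. The restrictions $T|_\mathcal{H}$ (bounded self-adjoint) and $U_k|_\mathcal{H}$ (bounded unitary) are commuting bounded normal operators, so by Fuglede's theorem $T|_\mathcal{H}$ also commutes with $U_k^*|_\mathcal{H}$; hence $T|_\mathcal{H}$ lies in the commutant $W^*(U_k|_\mathcal{H})'$, and by the bicommutant theorem $W^*(T|_\mathcal{H})\subset W^*(U_k|_\mathcal{H})'$. Since $H=-\ln |\,T|_\mathcal{H}\,|$ is obtained from $T|_\mathcal{H}$ by Borel functional calculus, every spectral projection of $H$ lies in $W^*(T|_\mathcal{H})$ and therefore commutes with every spectral projection $E_k(B)$ of $P_k$; applying the identical argument to the pair $U_j, U_k$ yields commutativity of $E_j$ and $E_k$. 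I expect the only subtle point to be the possible unboundedness of $H$, but this is inessential: the spectral family of $H$ is the pullback under a Borel function of the spectral family of the bounded operator $T|_\mathcal{H}$, so it automatically sits inside $W^*(T|_\mathcal{H})$ and inherits the commutation with $W^*(U_k|_\mathcal{H})$.
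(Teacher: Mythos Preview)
Your proof is correct and takes a genuinely different route from the paper. The paper argues by hand: it fixes $S=T^2$, expands suitable $f(P_k)$ as a Fourier series $\sum_n c_n U_k^n$, uses termwise commutativity with $S$ to get $f(P_k)S=Sf(P_k)$, and then passes to the limit $f_m(\lambda)\to\lambda$ via dominated convergence to obtain $P_kS=SP_k$; a second step shows $P_kF(S)\subset F(S)P_k$ for Borel $F$ and specializes to $F(\lambda)=-\tfrac12\ln\lambda$. Your argument instead packages all of this into standard operator-algebra facts: once $T$ and the $U_k$ commute as bounded operators on $\mathcal{H}$, the commutant $W^*(U_k)'$ is a von Neumann algebra containing $T$, hence contains $W^*(T)$ and in particular every spectral projection of $H$ (obtained from $T$ by Borel functional calculus), while the spectral projections $E_k(B)$ of $P_k$ lie in $W^*(U_k)$. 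Two small remarks: Fuglede is not actually needed, since $T=T^*$ already gives $TU_k^*=U_k^*T$ by taking adjoints of $TU_k=U_kT$; and the ``bicommutant theorem'' is only used in the mild form that $W^*(T)$ is the smallest von Neumann algebra containing $T$. Your approach is shorter and conceptually cleaner, at the cost of invoking von Neumann algebra language; the paper's approach is more elementary and self-contained, which matches its stated goal of giving a self-contained treatment.
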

\begin{proof}
First, we show the strong commutativity of $S:=T^2$ and $P_k$ ($k=1,2,\dots,d-1$).
Because $S$ and $P_k$ are bounded, it suffices to prove they commute
in the ordinary sense. Take arbitrary $u,v\in\mathcal{H}$.
Let $f\in C([-\pi,\pi])\cap C^1((-\pi,\pi))$ with $f(-\pi)=f(\pi)$, $f'\in L^2([-\pi,\pi])$ and $c_n$ be its Fourier 
coefficients :
\begin{align}
c_n:=\frac{1}{2\pi}\int_{[-\pi,\pi]} f(\lambda){\rm e}^{in\lambda}\,d\lambda.
\end{align} 
Then, $\{c_n\}_n\in l^2(\Integer)$ and $f$ can be written as an
infinite series
converging absolutely and uniformly in $\lambda$ \cite{Kuroda:1980}:
\begin{align}
f(\lambda)=\sum_{n\in\Integer}c_n {\rm e}^{-in\lambda}.
\end{align}
Since $S$ and $U_k^n$ are commuting for all $n\in\Integer$, we obtain for all $f\in C([-\pi,\pi])\cap C^1((-\pi,\pi))$ 
with $f(-\pi)=f(\pi)$ and $f'\in L^2([-\pi,\pi])$,
\begin{align}
\Expect{u, f(P_k)S\,v}
&=\int_{[-\pi,\pi]}f(\lambda) \,d\Expect{u,E_k(\lambda)S\,v}\no\\
&=\int_{[-\pi,\pi]}
\sum_{n\in\Integer}c_n{\rm e}^{-in\lambda} \,d\Expect{u,E_k(\lambda)S\,v}\no\\
&=
\sum_{n\in\Integer}c_n\int_{[-\pi,\pi]} {\rm e}^{-in\lambda} \,d\Expect{u,E_k(\lambda)S\,v}\no\\
&=\sum_{n\in\Integer}c_n\Expect{u,{\rm e}^{-inP_k}S\,v}\no\\
&=\sum_{n\in\Integer}c_n\Expect{u,U_k^n S\,v}\no\\
&=\sum_{n\in\Integer}c_n\Expect{u,S U_k^n \,v}\no\\
&=\sum_{n\in\Integer}c_n\int_{[-\pi,\pi]} {\rm e}^{-in\lambda}\,
d\Expect{u,S E_k(\lambda) \,v}\no\\
&=\int_{[-\pi,\pi]}\sum_{n\in\Integer}c_n{\rm e}^{-in\lambda}\,
d\Expect{u,S E_k(\lambda) \,v}\no\\
&=\int_{[-\pi,\pi]} f(\lambda)\,
d\Expect{u,S E_k(\lambda) \,v}\no\\
&=\Expect{u,f(P_k)S\,v},
\end{align}
by Fubini's theorem. Hence, for arbitrary $f\in C([-\pi,\pi])\cap C^1((-\pi,\pi))$ with $f(-\pi)=f(\pi)$, 
$f'\in L^2([-\pi,\pi])$, we have
\begin{align}\label{com of FT}
f(P_k)S=Sf(P_k).
\end{align}
Choose
a sequence $\{f_m\}_{m=1}^\infty$ such that 
\begin{align}
f_m\in C([-\pi,\pi])\cap C^1((-\pi,\pi)),\quad f_m(-\pi)=f_m(\pi),\,f_m'\in L^2([-\pi,\pi]),\,m=1,2,\dots,
\end{align}
and
\begin{align}
\lim_{m\to\infty}f_m(\lambda)=\lambda, \quad d\lambda - \text{a.e. } .
\end{align}
For instance, one may adopt
\begin{align}
f_m(\lambda)=\begin{cases}
\lambda-\pi\left(\frac{\lambda}{\pi}\right)^m & 0\le \lambda \le \pi\\
\\
\lambda+\pi\left(-\frac{\lambda}{\pi}\right)^m & -\pi \le \lambda < 0.
\end{cases}
\end{align}
By the Lebesgue convergence theorem, we obtain for all $u,v\in\mathcal{H}$, 
\begin{align}
\Expect{u,  P_kS\,v}&=\int_{[-\pi,\pi]} \lambda\, d\Expect{u, E_j(\lambda)S v} \no\\
	&=\int_{[-\pi,\pi]} \lim_{m\to\infty}f_m(\lambda)\, d\Expect{u, E_j(\lambda)S v} \no\\
	&=\lim_{m\to\infty}\int_{[-\pi,\pi]} f_m(\lambda)\, d\Expect{u, E_j(\lambda)S v} \no\\
	&=\lim_{m\to\infty}\Expect{u,f_m(P_k)S v}\no\\
	&=\lim_{m\to\infty}\Expect{u,S f_m(P_k) v}\no\\
	&=\lim_{m\to\infty}\int_{[-\pi,\pi]} f_m(\lambda)\, d\Expect{u, SE_j(\lambda) v} \no\\
	&=\int_{[-\pi,\pi]} \lim_{k\to\infty}f_m(\lambda)\, d\Expect{u, SE_j(\lambda) v} \no\\
	&=\int_{[-\pi,\pi]} \lambda\, d\Expect{u, SE_j(\lambda) v} \no\\
	&=\Expect{u,S P_k v}.
\end{align}
which results in
\begin{align}
P_k S=SP_k, \quad k=1,2,\dots,d-1.
\end{align}

Next, in order to prove the strong commutativity of $H$ and $P_k$, take any
real-valued Borel function $F$ satisfying $E_S(\{\lambda\in\Real\,:\,|F(\lambda)|=\infty\})=0$.
$F(S)$ is a (possibly unbounded) self-adjoint operator. Suppose we can show
\begin{align}\label{half com}
P_kF(S)\subset F(S)P_k.
\end{align}
Then, the bijectivity
of $F(S)-z$ and $P_k-w$, for arbitrary $z,w\in \Complex\setminus\Real$, shows that 
\begin{align}
(F(S)-z)^{-1}(P_k-w)^{-1}v=(P_k-w)^{-1}(F(S)-z)^{-1}v,\quad \forall v\in\mathcal{H}.
\end{align}
Therefore, $F(S)$ and $P_k$ are strongly commuting. Choosing 
\begin{align}
F(\lambda)=-\frac{1}{2}\ln \lambda,
\end{align}
so that $F(S)=H$ proves the strong commutativity of $H$ and $P_k$.

It remains to show \eqref{half com}. Suppose $v\in D(F(S))$. Then, from the strong
 commutativity of $S$ and $P_k$, we learn
\begin{align}
\int_\Real |F(\lambda)|^2\,d\Expect{P_k v,E_S(\lambda)P_k v}
&=\int_\Real |F(\lambda)|^2\,d\Expect{P_k^2v,E_S(\lambda)v}\no\\
&\le ||P_kv||^2\,\int_\Real |F(\lambda)|^2\,d|| E_S(\lambda)v||^2\no\\
&<\infty.
\end{align}
This shows that $v\in D(F(S))$ implies $P_k v\in D(F(S))$. Furthermore, for all 
$u\in\mathcal{H}$ and $v\in D(F(S))$,
we obtain
\begin{align}
\Expect{u,F(S)P_k v}&=\int F(\lambda)\,d\Expect{u,E_S(\lambda)P_k v}\no\\
&=\int F(\lambda)\,d\Expect{u,P_k E_S(\lambda) v}\no\\
&=\Expect{u,P_k F(S) v},
\end{align}
which shows \eqref{half com}.

The strong commutativity of $P_j$ and $P_k$ is similar and easier.
From the commutativity of $U_j$ and $U_k$, and the Fubini's theorem,
we have for any $f\in C([-\pi,\pi])\cap C^1((-\pi,\pi))$ with $f(-\pi)=f(\pi)$, $f'\in L^2([-\pi,\pi])$,
and for all $u,v\in\mathcal{H}$,
\begin{align}
\Expect{u, f(P_j) f(P_k) v}
&=\sum_{n,m\in\Integer}c_n c_m \Expect{u,U_j^n U_k^m v}\no\\
&=\sum_{n,m\in\Integer}c_n c_m\Expect{u,U_k^m U_j^n v}\no\\
&=\Expect{u,f(P_k) f(P_j) v}.
\end{align}
This shows 
\begin{align}
 f(P_j)  f(P_k)=f(P_k) f(P_j),
\end{align}
for all $f\in C([-\pi,\pi])\cap C^1([-\pi,\pi])$ with $f(-\pi)=f(\pi)$ and $f'\in L^2([-\pi,\pi])$. By the same limiting argument as above,
we obtain
\begin{align}
P_j P_k=P_k P_j,
\end{align}
completing the proof.
\end{proof}

Let $\{E_0(\cdot)\}$ the spectral measure of $H$, and $\{E_k(\cdot)\}$ be that of
$P_k$.
From Theorem \ref{strong com}, we can define the product spectral measure on $\Real^d$
\begin{align}
E:=E_0\times E_1\times E_2 \times\dots\times E_{d-1},
\end{align}
and the joint spectrum
\begin{align}
\sigma_J(E):=\supp E.
\end{align}
It is clear by the definition that
\begin{align}
\sigma_J(E)&\subset \sigma(H) \times \sigma(P_1)\times\dots \times \sigma(P_{d-1})  
\subset [0,\infty)\times [-\pi,\pi]^{d-1}.
\end{align}
In fact, this follows from 
\begin{align}
E(\sigma(H) \times \sigma(P_1)\times\dots \times \sigma(P_{d-1}) )
=E_0(\sigma(H))E_1(\sigma(P_1))\dots E_{d-1}(\sigma(P_{d-1})),
\end{align}
and $\sigma (H)= \supp E_0$, $\sigma(P_k)=\supp E_k$, $k=1,2,\dots,d-1$, and 
the definition of the support. $(H,\bvec P)$ can be expressed in terms of 
the product spectral measure $\{E(\cdot)\}$ by
\begin{align}
H&=\int_{\Real^d} \lambda \,dE(\lambda,\bvec p)=\int_{[0,\infty)\times[-\pi,\pi]^{d-1}}
 \lambda \,dE(\lambda,\bvec p),\no\\
P_k&=\int_{\Real^d} p_k \,dE(\lambda,\bvec p)=\int_{[0,\infty)\times[-\pi,\pi]^{d-1}} p_k \,dE(\lambda,\bvec p).
\end{align}
By these expressions, we learn
\begin{align}\label{FC joint}
\e^{-tH}{\rm e}^{-iP_1x_1}\dots{\rm e}^{-iP_{d-1}x_{d-1}}&=\e^{\overline{-tH-i\bvec P\cdot\bvec x}}
=\e^{-tH-i\bvec P\cdot\bvec x}\no\\
&=\int_{[0,\infty)\times[-\pi,\pi]^{d-1}} \e^{-\lambda t-i\bvec p\cdot\bvec x} \,dE(\lambda,\bvec p),
\end{align}
for $t=0,1,2,\dots$ and $\forall \bvec x\in\Integer^{d-1}$. For a closable operator $A$, $\bar A$
is its closure. In the third equality above, we have used the fact that $-tH-i\bvec P\cdot\bvec x$
is closed since $\bvec P\cdot\bvec x$ is bounded. 
We employ in what follows the notation
\begin{align}
U^{\bvec x}:={\rm e}^{-i\bvec P\cdot\bvec x}, \quad \bvec x\in\Integer^{d-1}.
\end{align}

\subsection{Remark: Site reflection positive case}\label{site}
In this subsection, the site reflection positivity condition \textbf{(A3S)} is assumed
in addition. In this case, $T\ge 0$ can be shown \cite{Ezawa1988xx}. 

Let $\theta_s$ be the site reflection defined by \eqref{site ref}. Note that the site reflection
$\theta_s$ is related to the link reflection $\theta$ by
\begin{align}
\theta=\tau_0\circ\theta_s.
\end{align}
Then, our additional assumption \textbf{(A3S)} can be read as
\begin{align}
\Expect{\theta_s(a)\,a}=\Expect{\tau_0^{-1}\circ\theta(a)\, a}=\Expect{\theta(a)\,\tau_0(a)}\ge0,
\end{align} 
for all $a\in\mathfrak{A}_+$.

Define as before for $t=0,1,2,\dots$ and for some fixed $a\in\mathfrak{A}_+$,
\begin{align}
F(t):=(a,\tau_0^t (a))_+.
\end{align}
Under the present assumptions \textbf{(A1)}-\textbf{(A4)} and \textbf{(A3S)},
 it can also be shown that not only \eqref{F positive even}
but also the value of $F$ at odd integers is also positive :
\begin{align}\label{pos at odd}
F(2t+1)&=(a,\tau_0^{2t+1} (a))_+ \no\\
	&=(\tau_0^{t}(a),\tau_0\circ\tau_0^{t}(a))_+\no\\
	&=\Expect{\theta(\tau_0^t (a))\, \tau_0(\tau_0^{t}(a))}\no\\
	&\ge 0,
\end{align}
for $t=0,1,2,\dots$. 
Here, we have used the site reflection positivity \textbf{(A3S)} and the translational invariance of
the expectation functional \textbf{(A2)}.
Combining \eqref{F positive even} and \eqref{pos at odd}
we obtain for \textit{all} $t=0,1,2,\dots$,
\begin{align}
F(t)\ge 0.
\end{align}
From this, it immediately follows that $T$ satisfies
\begin{align}
\Expect{[a],T^t[a]}_{\mathcal{H}}\ge 0,
\end{align}
for all $t=0,1,2\dots$, which implies 
\begin{align}
T \ge 0.
\end{align}
Hence, in the present case,
\begin{align}
T^t=\e^{-tH}
\end{align}
is true not only for even but also for odd $t\ge0$.

\section{Proof of Theorem \ref{main thm}}\label{sec: proof}
Now, we have collected sufficient materials to prove the Theorem \ref{main thm}.
After proving Theorem \ref{main thm}, we will discuss the Umezawa-Kamefuchi-K\"all\'en-Lehmann representation
of the propagator of a Euclidean lattice theory, and point out that 
a lattice bosonic model with negative spectral density must violates 
at least one of our assumptions \textbf{(A1)} - \textbf{(A4)}.
\subsection{Proof of Theorem \ref{main thm}}
The existence of such measure is a simple corollary of the above construction. In fact,
since
\begin{align}
\Expect{\phi(x)^*\phi(0)}&=\Expect{\phi(0)\phi(x)^*}\no\\
&=\Expect{\theta(\phi(1,\bvec 0)^*) \,\tau_{(x_0-1,\bvec x)}
(\phi(1,\bvec 0)^*)}\no\\
&=\Expect{[\phi(1,\bvec 0)^*], 
[\tau_{x_0-1,\bvec x)}(\phi(1,\bvec 0)^*)]}\no\\
&=\Expect{[\phi(1,\bvec 0)^*],
 T^{x_0-1}U^{\bvec x}
[\phi(1,\bvec 0)^*]},
\end{align}
and 
\begin{align}
T^{x_0-1}=T^{2m}=e^{-2mH},
\end{align}
one obtains by using \eqref{FC joint}
\begin{align}
\Expect{[\phi(1,\bvec 0)^*],
 T^{x_0-1}U^{\bvec x}
[\phi(1,\bvec 0)^*]}&=\Expect{[\phi(1,\bvec 0)^*],
 e^{-2mH}{\rm e}^{-i\bvec P\cdot \bvec  x}
[\phi(1,\bvec 0)^*]}\no\\
&=\int_{[0,\infty)\times [-\pi,\pi]^{d-1}}
e^{-2m\lambda}{\rm e}^{-i\bvec p\cdot\bvec  x}
\,d || E(\lambda,\bvec p)[\phi(1,\bvec 0)^*]||^2.
\end{align}
This proves the existence.

The uniqueness remains to be proved. Clearly,
it suffices to prove the following proposition.

\begin{Prop}\label{prop: uniqueness}
For an $\Real$-valued $d$-dimensional Borel measure
(i.e. bounded signed Borel measure) $\sigma$ supported on $K:=[0,\infty)\times[-\pi,\pi]^{d-1}$,
define
\begin{align}
\tilde\sigma(m,\bvec x):=\int_K {\rm e}^{-2m\lambda-i\bvec p\cdot\bvec x}\,d\sigma(\lambda,\bvec p), 
\quad m=0,1,\dots,\quad\bvec x\in\Integer^{d-1}.
\end{align}
Suppose 
that given two $\Real$-valued Borel measures $\sigma_1$ 
and $\sigma_2$ satisfy
\begin{align}
\tilde{\sigma}_1(m,\bvec x)=\tilde\sigma_2(m,\bvec x),
\end{align}
for all $m=0,1,\dots$ and $x\in\Integer^{d-1}$.
Then, $\sigma_1=\sigma_2$.
\end{Prop}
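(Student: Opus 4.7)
The plan is to reduce the claim to density of $\{\e^{-2m\lambda - i\bvec p \cdot \bvec x}\}_{m, \bvec x}$ in the continuous functions on a compactification of $K$, so that the Riesz representation theorem will force $\mu := \sigma_1 - \sigma_2$ to vanish. By hypothesis $\mu$ is a bounded signed Borel measure on $K = [0,\infty) \times [-\pi,\pi]^{d-1}$ whose ``transform'' $\tilde{\mu}(m, \bvec x)$ vanishes for every $m = 0, 1, 2, \ldots$ and every $\bvec x \in \Integer^{d-1}$; the goal is to deduce $\mu = 0$.

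First I would compactify the $\lambda$-direction by the change of variables $u = \e^{-2\lambda}$. Concretely, set $\tilde K := [0,1] \times \mathbb{T}^{d-1}$ with $\mathbb{T}^{d-1} := \Real^{d-1}/(2\pi\Integer)^{d-1}$, define $\Phi : K \to \tilde K$ by $\Phi(\lambda, \bvec p) := (\e^{-2\lambda}, \bvec p)$, and let $\nu := \Phi_* \mu$ be the pushforward. Since the image of $\Phi$ lies in $(0, 1] \times \mathbb{T}^{d-1}$, the measure $\nu$ assigns zero mass to $\{0\} \times \mathbb{T}^{d-1}$, and the hypothesis becomes
\[
\int_{\tilde K} u^m \e^{-i\bvec p \cdot \bvec x}\, d\nu(u, \bvec p) = 0 \quad \text{for all } m = 0,1,2,\ldots \text{ and } \bvec x \in \Integer^{d-1}.
\]

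Next I invoke the Stone--Weierstrass theorem. The complex linear span $\mathcal{A}$ of $\{u^m \e^{-i\bvec p \cdot \bvec x} : m \ge 0,\ \bvec x \in \Integer^{d-1}\}$ is a unital $*$-subalgebra of $C(\tilde K)$ (closed under complex conjugation because $-\bvec x \in \Integer^{d-1}$ whenever $\bvec x$ is, and because $u$ is real-valued), and it separates points of $\tilde K$: the coordinate function $u$ separates pairs with distinct first coordinates, and some character $\e^{-i\bvec p \cdot \bvec x}$ separates the rest. Hence $\mathcal{A}$ is uniformly dense in $C(\tilde K)$. Approximating an arbitrary $f \in C(\tilde K)$ uniformly by elements of $\mathcal{A}$, together with the bound $\bigl|\int g\, d\nu\bigr| \le \|g\|_\infty\, |\nu|(\tilde K)$, yields $\int f\, d\nu = 0$ for every $f \in C(\tilde K)$. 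Riesz representation on the compact Hausdorff space $\tilde K$ then forces $\nu = 0$, and pulling back through $\Phi$ produces $\mu = 0$.

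The main obstacle is conceptual rather than technical: one must recognize that the $\lambda$-direction needs compactifying before Stone--Weierstrass is available, and that the $\bvec p$-coordinate should be read on the torus rather than on the closed cube --- the characters $\e^{-i\bvec p \cdot \bvec x}$ cannot distinguish $p_k = -\pi$ from $p_k = +\pi$, so the stated uniqueness makes sense only after identifying these boundary points. That identification is in any case consistent with the origin of $\rho$ via the spectral measure of the unitaries $U_k$ constructed in Section~\ref{sec: lattice QM}, which lives naturally on the unit circle.
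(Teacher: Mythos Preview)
Your proposal is correct and follows essentially the same route as the paper: both push the measures forward through $\lambda\mapsto\e^{-2\lambda}$ to land on $[0,1]\times T^{d-1}$, then use density of the span of $\{u^m\e^{-i\bvec p\cdot\bvec x}\}$ in $C([0,1]\times T^{d-1})$ together with the uniqueness part of the Riesz--Markov theorem. Your version is in fact a bit more explicit than the paper's, which passes from \eqref{eq: pol and exp relation} to integration against arbitrary continuous $f$ without naming Stone--Weierstrass, and your remark that the $\bvec p$-variable must be read on the torus (since the characters cannot separate $p_k=\pm\pi$) clarifies a point the paper leaves implicit.
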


\begin{proof}
Let $A\subset [0,1]$ and $B\subset [-\pi,\pi]^{d-1}$ be Borel sets, and 
$f(\lambda)={\rm e}^{-2\lambda}$.
and define
\begin{align}
\mu_i(A\times B):=\sigma_i(f^{-1}(A)\times B),\quad i=1,2.
\end{align}
By Hopf's extension theorem, this relation defines bounded singed Borel measures $\mu_1,\mu_2$ on 
$[0,1]\times[-\pi,\pi]^{d-1}$. Since $\mu_1=\mu_2$ implies $\sigma_1=\sigma_2$,
to show $\sigma_1=\sigma_2$ is sufficient.

The assumption $\tilde\sigma_1=\tilde\sigma_2$ is
equivalent to
\begin{align}
\int_{[0,1]\times[-\pi,\pi]^{d-1}} t^m {\rm e}^{-i\bvec p\cdot \bvec x}\, d\mu_1(t,\bvec p)
=\int_{[0,1]\times[-\pi,\pi]^{d-1}} t^m {\rm e}^{-i\bvec p\cdot \bvec x}\, d\mu_2(t,\bvec p)
\end{align}
for all $m=0,1,\dots$ and $x\in\Integer^{d-1}$.
This implies that for all polynomial $P(t)$ on $[0,1]$,
\begin{align}\label{eq: pol and exp relation}
\int_{[0,1]\times[-\pi,\pi]^{d-1}} P(t) {\rm e}^{-i\bvec p\cdot \bvec x}\, d\mu_1(t,\bvec p)
=\int_{[0,1]\times[-\pi,\pi]^{d-1}} P(t) {\rm e}^{-i\bvec p\cdot \bvec x}\, d\mu_2(t,\bvec p).
\end{align}
Let $T^{d-1}$ be a $d-1$ dimensional torus, that is, $T^{d-1}=[-\pi,\pi]^{d-1}$ with 
$-\pi$ and $\pi$ identified. 
Take arbitrary $f\in C(X)$, a continuous function on
 $X:=[0,1]\times T^{d-1}$.
From \eqref{eq: pol and exp relation}, we obtain
\begin{align}
\int_X f\,d \mu_1 =\int_X f\,d\mu_2, \quad f\in C(X).
\end{align}
By the uniqueness statement of the Riesz-Markov theorem (\cite{Reed:1980}, Theorem IV.14),
we have
\begin{align}
\mu_1(A)=\mu_2(A),\quad A\in\mathcal{B}_X,
\end{align}
where $\mathcal{B}_X$ is the Borel field in $X$.
This implies
$\mu_1=\mu_2$ and then $\sigma_1=\sigma_2$,
which completes the proof.
\end{proof} 

\subsection{Umezawa-Kamefuchi-K\"all\'en-Lehmann representation in more general cases}
Suppose a lattice model $(\mathfrak{A}_L,\Expect\cdot)$ which may not satisfy
\textbf{(A1)} - \textbf{(A4)} happens to have the representation
\begin{align}\label{eq: general KL}
\Expect{\phi(x)^*\phi(0)}\Big|_{x_0>0}=\int_{[0,\infty)\times[-\pi,\pi]^{d-1}} 
{\rm e}^{-\lambda (x_0-1)-i\bvec p\cdot\bvec x}\,d\rho(\lambda,\bvec p).
\end{align}
with some bounded signed Borel measurer $\rho$, which does not have to be a positive measure
in this case. We also call \eqref{eq: general KL} the Umezawa-Kamefuchi-K\"all\'en-Lehmann representation
even if the lattice theory $(\mathfrak{A}_L,\Expect\cdot)$ violates some of our assumptions
\textbf{(A1)} - \textbf{(A4)}.
By Proposition \ref{prop: uniqueness}, this $\rho$ is uniquely 
determined by the above representation \eqref{eq: general KL}.
We also call a spectral density, the Radon-Nikodym derivative of the absolutely continuous part of $\rho$.

 Combining Theorem \ref{main thm} and Proposition \ref{prop: uniqueness},
one concludes that :
\begin{Thm}\label{cor: violation}
A lattice model $(\mathfrak{A}_L,\Expect\cdot)$ with a spectral density $\sigma$
which becomes negative on a set of positive Lebesgue measure,
breaks at least one of our assumptions \textbf{(A1)} - \textbf{(A4)}. 
\end{Thm}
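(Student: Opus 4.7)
My plan is to prove Theorem \ref{cor: violation} by contrapositive, using Theorem \ref{main thm} together with the uniqueness statement of Proposition \ref{prop: uniqueness}. The logical skeleton is short: if all four assumptions held, Theorem \ref{main thm} would furnish a \emph{positive} representing measure, and uniqueness would force this to coincide with the measure whose Radon-Nikodym derivative is the given (partly negative) spectral density, yielding a contradiction.

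More concretely, I would suppose for contradiction that the lattice model satisfies \textbf{(A1)}--\textbf{(A4)}, while at the same time admitting an Umezawa-Kamefuchi-K\"all\'en-Lehmann representation \eqref{eq: general KL} with a bounded signed Borel measure $\rho$ whose absolutely continuous part has Radon-Nikodym derivative $\sigma$ that is strictly negative on a Borel set $E\subset[0,\infty)\times[-\pi,\pi]^{d-1}$ of positive Lebesgue measure. Under \textbf{(A1)}--\textbf{(A4)}, Theorem \ref{main thm} produces a bounded \emph{positive} Borel measure $\rho'$ on $K=[0,\infty)\times[-\pi,\pi]^{d-1}$ such that
\begin{align}
\Expect{\phi(x)^*\phi(0)}=\int_K \e^{-2m\lambda}\e^{-i\bvec p\cdot \bvec x}\,d\rho'(\lambda,\bvec p),\quad x_0=2m+1,\ m=0,1,2,\dots.
\end{align}
Since the given representation \eqref{eq: general KL} with $\rho$ reduces, at $x_0=2m+1$, to precisely the same integral identity, I would then invoke Proposition \ref{prop: uniqueness} (applied to the signed measure $\rho-\rho'$) to conclude $\rho=\rho'$.

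The final step is to turn this equality into a contradiction with the Lebesgue negativity of $\sigma$ on $E$. Because $\rho'$ is positive, so is its absolutely continuous part, and hence its Radon-Nikodym derivative $\sigma'$ with respect to the Lebesgue measure $d\lambda\,d\bvec p/(\pi(2\pi)^{d-1})$ satisfies $\sigma'\ge 0$ almost everywhere. By the equality $\rho=\rho'$, the Lebesgue decomposition is unique, so the absolutely continuous parts agree and $\sigma=\sigma'$ almost everywhere; but this forces $\sigma\ge0$ a.e., contradicting $\sigma<0$ on $E$ with $|E|>0$.

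The only mildly delicate point is matching the two representations: Theorem \ref{main thm} is stated only at odd time separations $x_0=2m+1$, while the general representation \eqref{eq: general KL} is assumed for all $x_0>0$. This is not an obstacle because Proposition \ref{prop: uniqueness} is already formulated exactly for the family $\{(m,\bvec x):m=0,1,\dots,\ \bvec x\in\Integer^{d-1}\}$ of odd-time data; restricting \eqref{eq: general KL} to $x_0=2m+1$ yields data in the form to which uniqueness applies, so the full content of \eqref{eq: general KL} for even $x_0$ is not needed to identify $\rho$ with $\rho'$. Thus the whole argument is essentially a bookkeeping exercise combining Theorem \ref{main thm}, Proposition \ref{prop: uniqueness}, and the uniqueness of the Lebesgue decomposition.
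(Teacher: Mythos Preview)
Your proposal is correct and follows essentially the same approach as the paper's own proof: assume \textbf{(A1)}--\textbf{(A4)} toward a contradiction, invoke Theorem~\ref{main thm} to obtain a positive representing measure, use Proposition~\ref{prop: uniqueness} at odd times $x_0=2m+1$ to identify it with the given signed measure, and conclude that the spectral densities agree almost everywhere, contradicting negativity. Your treatment is in fact slightly more explicit than the paper's (you spell out the uniqueness of the Lebesgue decomposition and the odd-time restriction), but the underlying argument is identical.
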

\begin{proof}
By the hypotheses, it has  the Umezawa-Kamefuchi-K\"all\'en-Lehmann representation of propagators 
\eqref{eq: general KL} with some $\rho'$.
Put $x_0=2m+1\,(m=0,1,\dots)$ and then, we have
\begin{align}
\Expect{\phi(x)^*\phi(0)}=\int_{[0,\infty)\times[-\pi,\pi]^{d-1}} 
{\rm e}^{-2m\lambda-i\bvec p\cdot\bvec x}\,d\rho'(\lambda,\bvec p),\quad x_0=2m+1,\,m=0,1,\dots.
\end{align}
where $\rho'$ is some bounded singed Borel measure, and its spectral density we denote by $\sigma'$
is not a non-negative function by the hypotheses.

Suppose, toward a contradiction, the lattice model $(\mathfrak{A}_L,\Expect\cdot)$
satisfies all the assumptions \textbf{(A1)} - \textbf{(A4)}. Then, Theorem \ref{main thm} says that
it has a representation 
\begin{align}
\Expect{\phi(x)^*\phi(0)}=\int_{[0,\infty)\times[-\pi,\pi]^{d-1}} 
{\rm e}^{-2m\lambda-i\bvec p\cdot\bvec x}\,d\rho(\lambda,\bvec p),\quad x_0=2m+1,\,m=0,1,\dots,
\end{align}
with a spectral density $\sigma$, which is associated with $\rho$, and 
$\sigma$ has to be a non-negative function.

However, by Proposition \ref{prop: uniqueness}, $\rho$ and $\rho'$ must be identical,
and therefore $\sigma$ and $\sigma'$ are identical almost everywhere with respect
to the Lebesgue measure,
which is a contradiction.
\end{proof}

\section{Violation of the reflection positivity in Lattice overlap boson system}\label{sec: overlap boson}
We will discuss the lattice overlap boson system as an application of  the previous
results. It will be shown that the free overlap boson system violates the reflection positivity 
condition.

\subsection{Klein-Gordon type Overlap Dirac Operator}
The free overlap Dirac operator \cite{Neuberger:1997fp,Neuberger:1998wv},
\begin{align}
D:l^2(\Integer^d;\Complex^{2^{d/2}})\to l^2(\Integer^d;\Complex^{2^{d/2}}), 
\end{align}
is defined through the lattice Wilson Dirac operator,
which is bounded normal operator on $l^2(\Integer^d;\Complex^{2^{d/2}})$,
\begin{align}
D_w=\sum_{\mu=0}^{d-1}\left(\frac{1}{2}\gamma_\mu(\partial_\mu-\partial_\mu^\dagger)+
\frac{1}{2}\partial_\mu\partial_\mu^\dagger\right),
\end{align}
by
\begin{align}
D=\frac{1}{R}\left(1+\frac{D_w-1}{\sqrt{(D_w-1)^\dagger(D-1)}}\right), 
\end{align}
with a parameter $R>0$.
Here, for a linear operator $A$ in $l^2(\Integer^d,\Complex^{2^{d/2}})$, $A^\dagger$ denotes its adjoint
operator. $\partial_\mu$ is a forward differential operator on the lattice :
\begin{align}
(\partial_\mu u)(x):=u(x+e_\mu)-u(x),\quad u\in l^2(\Integer^d;\Complex^{2^{d/2}})
\end{align}
and $\partial_\mu^\dagger$ is a backward differential operator, which is the adjoint operator
of $\partial_\mu$, given by
\begin{align}
(\partial_\mu^\dagger u)(x):=u(x)-u(x+e_\mu),\quad u\in l^2(\Integer^d;\Complex^{2^{d/2}}).
\end{align} 
$\{\gamma_\mu\}_{\mu=0}^{d-1}$ are the Euclidean gamma matrices, satisfying
\begin{align}
\gamma_\mu\gamma_\nu+\gamma_\nu\gamma_\mu=2\delta_{\mu\nu}\hat{1},\quad \mu,\nu=0,1,\dots,d-1.
\end{align}
The physical motivation for the use of the overlap Dirac operator $D$ is that 
it satisfies the Ginsparg-Wilson relation \cite{Ginsparg:1981bj}
thanks to which the chiral symmetry can be suitably defined on the lattice \cite{Luscher:1998pqa}
in spite of the 
notorious no-go theorem of Neilsen and Ninomiya \cite{Nielsen:1980rz,Nielsen:1981xu}.

The lattice overlap boson system is characterized by a ``Klein-Gordon" type operator,
which we denote by $\square$,
defined in terms of the overlap Dirac operator by the relation :
\begin{align}
\square\cdot\hat{1}=D^\dagger D.
\end{align}
Note that $D^\dagger D$ no longer has a spinor structure and then 
proportional to unit matrix $\hat{1}$ in the spinor indices. As a result, 
the overlap boson operator $\square$ given above is considered to be an operator
in $l^2(\Integer^d)$. 

By the definition, $\square$ is a non-negative
self-adjoint linear operator. We will further analyze 
the properties of $\square$.
Let $\mathcal{F}$ be the Fourier transformation from the position space ($x$-space) 
$l^2(\Integer^d)$ 
to the momentum space ($p$-space)
$L^2([-\pi,\pi]^d, dp/(2\pi)^d)$ :
\begin{align}
\mathcal{F}:l^2(\Integer^d)\to L^2([-\pi,\pi]^d, dp/(2\pi)^d),
\end{align}
defined by
\begin{align}
(\mathcal{F}u)(p):=\sum_{x\in\Integer^d}u(x){\rm e}^{-ip\cdot x},\quad u\in l^2(\Integer^d),
\end{align}
where $p\in [-\pi,\pi]^d$ and $p\cdot x=p_0x_0+\dots+p_{d-1}x_{d-1}$. The series
in the right hand side converges in $L^2([-\pi,\pi]^d, dp/(2\pi)^d)$ norm. From the theory
of Fourier series (see, for example, \cite{Kuroda:1980}), $\mathcal{F}$ is a unitary operator with the inverse
\begin{align}
(\mathcal{F}^{-1}u)(x)=\int_{[-\pi,\pi]^d} \frac{dp}{(2\pi)^d}\,u(p){\rm e}^{ip\cdot x},\quad u\in L^2([-\pi,\pi]^d, dp/(2\pi)^d).
\end{align}
In the momentum space, $\square$ is a multiplication operator by the real-valued function
\begin{align}\label{laplacian kernel}
[-\pi,\pi]^d\ni p\mapsto\Box(p)
%&:=\Box {\rm e}^{ipx}/{\rm e}^{ipx}\no\\
&=\frac{2}{R^2}\left\{1-\frac{1-\sum_\mu(1- \cos p_\mu)}{\sqrt{\sum_\mu \sin^2 p_\mu+\big[1-\sum_\mu(1- \cos p_\mu)\big]^2}}\right\}\no\\
&=\frac{2}{R^2}\left\{1+\frac{b(\bvec p)-\cos p_0}{\sqrt{a(\bvec p)-2b(\bvec p)\cos p_0}}\right\},
\end{align}
where 
\begin{align}
a(\bvec p)&=1+\sum_{j=1}^{d-1} \sin^2 p_j+b(\bvec p)^2,\label{a}\\
b(\bvec p)&=\sum_{j=1}^{d-1}(1- \cos p_j).\label{b}
\end{align}
We set the above
 unimportant factor $2/R^2$ to be unity by choosing $R=\sqrt{2}$ to make equations simple, and
denote a multiplication operator by a function $p\mapsto\square(p)$ by the same symbol $\square(p)$.
Since the spectrum and the eigenvalues of an operator in a Hilbert space
 is invariant under unitary transformation,
we find that the spectrum of $\square$ is given by
\begin{align}\label{eq: spectrum of square}
\sigma(\square)=\sigma(\mathcal{F}^{-1}\square\mathcal{F})=\sigma(\square(p))=\overline{
\{ \square(p)\,:\, p\in [-\pi,\pi]^d\}},
\end{align}
and the set of eigenvalues of $\square$, $\sigma_{\text{p}}(\square)$, is given by
\begin{align}\label{eq: eigenvalue of square}
\sigma_\text{p}(\square)&=\sigma_\text{p}(\square(p))\no\\
&=\{\lambda\in\Complex \,:\, | \square^{-1}(\{\lambda\}) | =0 \},
\end{align}
where for a
Lebesgue measurable set $A\subset \Real $, $|A|$ is its Lebesgue measure.
\begin{Prop}\label{prop: bdd and injective}
$\square$ is bounded and injective.
\end{Prop}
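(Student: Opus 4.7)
The plan is to exploit the fact that via the Fourier transform $\mathcal{F}$, the operator $\square$ is unitarily equivalent to multiplication by the function $\square(p)$ given in \eqref{laplacian kernel}. Since a multiplication operator on $L^2([-\pi,\pi]^d, dp/(2\pi)^d)$ by a measurable function $m(p)$ is bounded if and only if $m \in L^\infty$ (with norm equal to the essential supremum), and is injective if and only if the zero set $\{p : m(p)=0\}$ has Lebesgue measure zero, both statements reduce to concrete pointwise analyses of $\square(p)$.

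For boundedness, I would first verify that $\square(p)$ is continuous on the compact set $[-\pi,\pi]^d$, so that it is automatically bounded. The only possible singularity comes from the denominator $\sqrt{a(\bvec p)-2b(\bvec p)\cos p_0}$ in \eqref{laplacian kernel}, which up to a factor equals $\sqrt{\sum_\mu \sin^2 p_\mu + [1-\sum_\mu(1-\cos p_\mu)]^2}$. A zero of this expression requires $\sin p_\mu = 0$ for all $\mu$, so that each $p_\mu \in\{-\pi,0,\pi\}$; writing $k$ for the number of components equal to $\pm\pi$, one has $\sum_\mu(1-\cos p_\mu) = 2k$, so the bracket $1-2k$ vanishes iff $k=1/2$, which is impossible. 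Hence the denominator is bounded below by a positive constant on $[-\pi,\pi]^d$, so $\square(p)$ is continuous and bounded. This proves $\square \in B(l^2(\Integer^d))$.

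For injectivity, I would solve the equation $\square(p)=0$ explicitly. Setting the second form in \eqref{laplacian kernel} equal to zero yields $\cos p_0 - b(\bvec p) = \sqrt{a(\bvec p)-2b(\bvec p)\cos p_0}$, which requires $\cos p_0 \ge b(\bvec p)$. Squaring and using the definition \eqref{a} of $a(\bvec p)$, the equation reduces to
\begin{equation*}
\cos^2 p_0 = 1 + \sum_{j=1}^{d-1}\sin^2 p_j.
\end{equation*}
Since $\cos^2 p_0 \le 1$ and $\sin^2 p_j \ge 0$, this forces $\cos^2 p_0 = 1$ and $\sin p_j = 0$ for all $j$. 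Combining with $\cos p_0 \ge b(\bvec p)\ge 0$ and a brief check of the remaining sign possibilities for $p_0,p_j\in\{-\pi,0,\pi\}$ rules out everything except $p_0=0$ and $p_j=0$, i.e.\ $p=0$. Thus $\{p\in[-\pi,\pi]^d : \square(p)=0\}=\{0\}$, which has Lebesgue measure zero. Hence $\mathcal{F}^{-1}\square\mathcal{F}=\square(p)$ is injective on $L^2([-\pi,\pi]^d, dp/(2\pi)^d)$, and by unitary equivalence so is $\square$.

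The only mildly delicate step is the case analysis verifying that the unique zero of $\square(p)$ is at $p=0$; everything else is either a direct consequence of the continuity of $\square(p)$ on a compact cube or a general fact about multiplication operators.
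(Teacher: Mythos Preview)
Your proposal is correct and follows essentially the same route as the paper: reduce both claims to pointwise properties of the symbol $\square(p)$, show the denominator never vanishes so that $\square(p)$ is continuous on the compact cube (you use the symmetric form $\sum_\mu\sin^2 p_\mu+[1-\sum_\mu(1-\cos p_\mu)]^2$ directly, whereas the paper rewrites this as $a-2b\cos p_0$ and checks $a-2b=(b-1)^2+\sum_j\sin^2 p_j>0$), and then solve $\square(p)=0$ by squaring to obtain $\cos^2 p_0=1+\sum_j\sin^2 p_j$, forcing $p=0$. The only place you are slightly less explicit than the paper is the final ``brief check'' ruling out $p_\mu=\pm\pi$; the paper does this by noting that $\sin p_j=0$ makes $b$ even while $b\le\cos p_0=1$ forces $b=0$.
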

\begin{proof}
By \eqref{eq: spectrum of square} and \eqref{eq: eigenvalue of square}, 
it suffices to show that (i) $\square(p)$ is a bounded function and
(ii) $\square(p)\not=0$ almost everywhere in $p$ with respect to the Lebesgue measure.

We denote $a=a(\bvec p)$ and $b=b(\bvec p)$ for notational simplicity. By the definitions \eqref{a} and
\eqref{b},
we find
\begin{align}
a\ge 1,\quad b\ge 0,\quad a > 2b.
\end{align}
The last inequality follows since
\begin{align}
a-2b&=1+\sum_j\sin^2 p_j+b^2 -2b \no\\
	&=(b-1)^2 + \sum_j \sin^2 p_j \ge 0,
\end{align}
and the equality holds when and only when
\begin{align}
b=1\quad\text{and}\quad  \sin p_j = 0,\,(j=1,2,\dots,d-1),
\end{align}
which is impossible because whenever the latter is valid, $b=\sum_j(1-\cos p_j)$
has to be even. Thus, $a-2b>0$ for all $\bvec p\in [\pi,\pi]^{d-1}$.

Since the function
\begin{align}
\bvec p\mapsto a(\bvec p)-2b(\bvec p),
\end{align} 
is continuous, it must have a positive minimum $c>0$.
This implies the denominator of $\square (p)$ is bounded from below
by some positive constant :
\begin{align}
a-2b\cos p_0 \ge a-2b \ge c > 0. 
\end{align}
Thus, $p\mapsto\square(p)$ is continuous on $[\pi,\pi]^d$ and its range is compact, which proves (i).

To prove (ii), suppose $\square(p)=0$, that is,
\begin{align}\label{eq: zero of square}
\sqrt{a-2b\cos p_0} = -b+\cos p_0.
\end{align}
This is equivalent to 
\begin{align}\label{eq: zero conditions}
a-b^2=\cos^2 p_0\quad \text{and} \quad b\le \cos p_0.
\end{align}
The first condition in \eqref{eq: zero conditions}, $a-b^2=\cos^2 p_0$, is equivalent to
\begin{align}
1+\sum_j \sin^2 p_j = \cos^2 p_0,
\end{align}
which is possible only when both sides are equal to $1$ since
\begin{align}
1+\sum_j \sin^2 p_j \ge 1, \quad \cos^2 p_0 \le 1.
\end{align}
On the other hand, the second condition in \eqref{eq: zero conditions}, $b\le \cos p_0$, 
is possible only when $\cos p_0\ge 0$ since
$b\ge 0$. 
Therefore \eqref{eq: zero of square} is equivalent to
\begin{align}
\sin p_j =0, \,(j=1,2,\dots,d-1)\quad \text{and} \quad \cos p_0 =1.
\end{align}
Noting that the former implies $b$ must be even, we conclude this condition is valid if and only
if
\begin{align}
p_\mu=0 , \quad \mu=0,1,\dots,d-1,
\end{align}
which means $\square(p)=0$ if and only if $p=0$.
Thus, (ii) is proved.
\end{proof}

By Proposition \ref{prop: bdd and injective} $\square$ has the inverse operator $\square^{-1}$.
The operator $\square^{-1}$ is an unbounded self-adjoint operator, which is strictly positive.
This can be seen by noting that
$\mathcal{F}^{-1}\square^{-1}\mathcal{F}$ is a multiplication operator by the function
\begin{align}
p\mapsto \frac{1}{\square(p)}.
\end{align}
Furthermore, by noting that $\square(p)$ behaves like $\sim \sum_\mu p_\mu^2$ when
$p$ is small, 
we can prove :
\begin{Prop}\label{prop: inverse is L^1}
The function $p\mapsto\square(p)^{-1}$ belongs to $L^1([-\pi,\pi]^d,dp/(2\pi)^d)$ if
$d\ge 3$.
\end{Prop}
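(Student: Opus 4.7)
The plan is to localize the problem to a neighborhood of $p=0$ and there extract the precise quadratic vanishing rate of $\square(p)$, so that integrability reduces to the elementary fact that $|p|^{-2}$ is integrable at the origin in $\Real^d$ exactly when $d\ge 3$. Away from $p=0$, continuity of $\square$ on $[-\pi,\pi]^d$ together with Proposition \ref{prop: bdd and injective} (which identifies $p=0$ as the unique zero of $\square$) provides a uniform positive lower bound, so $1/\square(p)$ is bounded on $\{|p|\ge\delta\}$ for any $\delta>0$ and contributes only finitely to the integral. The whole question therefore reduces to integrability on a small ball around the origin.

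The next step is a Taylor expansion of $a(\bvec p)$, $b(\bvec p)$, and $\cos p_0$ near $p=0$. Using $1-\cos p_j=\tfrac12 p_j^2+O(p_j^4)$ and $\sin^2 p_j=p_j^2+O(p_j^4)$, one obtains $b(\bvec p)=\tfrac12|\bvec p|^2+O(|\bvec p|^4)$ and $a(\bvec p)=1+|\bvec p|^2+O(|\bvec p|^4)$, from which $a(\bvec p)-2b(\bvec p)\cos p_0=1+O(|p|^4)$, while $b(\bvec p)-\cos p_0=-1+\tfrac12|p|^2+O(|p|^4)$. Substituting into the explicit formula \eqref{laplacian kernel}, the leading $-1$ in the numerator exactly cancels the $+1$ sitting in front in $\square(p)=1+(b-\cos p_0)/\sqrt{\,\cdot\,}$, producing
\begin{align*}
\square(p)=\tfrac{1}{2}|p|^2+O(|p|^4)\qquad\text{as }p\to 0.
\end{align*}
Hence there exist constants $\delta,C>0$ such that $\square(p)\ge C|p|^2$ for all $|p|\le\delta$.

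Finally, passing to polar coordinates in the ball $\{|p|\le\delta\}$,
\begin{align*}
\int_{|p|\le\delta}\frac{dp}{\square(p)}\;\le\;\frac{1}{C}\int_{|p|\le\delta}\frac{dp}{|p|^2}\;=\;\mathrm{const}\cdot\int_0^\delta r^{d-3}\,dr,
\end{align*}
which is finite iff $d-3>-1$, i.e.\ $d\ge 3$. Combined with the trivial bound on the complement $\{|p|\ge\delta\}$, this gives $\square(\cdot)^{-1}\in L^1([-\pi,\pi]^d,dp/(2\pi)^d)$. The only delicate point is the Taylor expansion: one must retain enough terms to witness the cancellation of constants that forces $\square(p)$ to vanish quadratically rather than merely continuously. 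Everything else is routine, and I do not anticipate further obstacles.
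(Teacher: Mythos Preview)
Your proposal is correct and follows essentially the same strategy as the paper: split the integral into a small ball around the origin and its complement, establish a quadratic lower bound $\square(p)\ge C|p|^2$ near $p=0$, and then invoke integrability of $|p|^{-2}$ in dimension $d\ge 3$. The only cosmetic difference is that you obtain the quadratic bound via a Taylor expansion of $a,b,\cos p_0$, whereas the paper manipulates the original formula for $\square(p)$ directly to reach $\square(p)\ge 1-(1+|p|^2)^{-1/2}$ and then applies the mean value theorem; both routes yield the same local estimate.
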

\begin{proof}
Fix sufficiently small $\epsilon>0$, and suppose $|p|<\epsilon$. Here, we have employed the notation
\begin{align}
|p|:= \sqrt{\sum_\mu p_\mu^2 },\quad p \in [-\pi,\pi]^d.
\end{align}
Then, since
\begin{align}
\frac{\sum_\mu \sin^2 p_\mu}{\big[1-\sum_\mu(1- \cos p_\mu)\big]^2}
\ge \sum_\mu \sin^2 p_\mu 
\ge |p|^2,
\end{align}
we can estimate
\begin{align}
\square(p) &\ge 1-\frac{1}{\sqrt{1+\frac{\sum_\mu \sin^2 p_\mu}{\big[1-\sum_\mu(1- \cos p_\mu)\big]^2}}}\no\\
	&\ge 1- \frac{1}{\sqrt{1+|p|^2}}.
\end{align}
From the mean value theorem, there is some constant $C_\epsilon>0$ such that
\begin{align}
(1+|p|^2)^{-1/2} = 1 - C_\epsilon |p|^2, \quad |p|<\epsilon.
\end{align}
Thus we obtain
\begin{align}
\square(p)\ge C_\epsilon |p|^2 \quad |p|<\epsilon.
\end{align}
Therefore, if $d\ge 3$,
\begin{align}
\int_{[-\pi,\pi]^d} \frac{dp}{(2\pi)^d}\, \frac{1}{\square(p)}
&=\left(\int_{|p|<\epsilon}+\int_{|p|\ge\epsilon}\right) \frac{dp}{(2\pi)^d}\, \frac{1}{\square(p)}\no\\
&\le \int_{|p|<\epsilon} \frac{dp}{(2\pi)^d}\,\frac{1}{C_\epsilon |p|^2} + \int_{|p|\ge\epsilon}\frac{dp}{(2\pi)^d}\, \frac{1}{\square(p)} \no\\
&<\infty.
\end{align}
This completes the proof.
\end{proof}

\subsection{Lattice overlap boson system}
Let $u,v\in D(\square^{-1})$ and define the inner product of $u,v$ by
\begin{align}\label{inpro h}
(u,v)_h:=(u,\square^{-1} v)_{l^2(\Integer^d)},
\end{align} 
and $D(\square^{-1})$ becomes a Hilbert space with respect to the inner product \eqref{inpro h}.
We denote this Hilbert space by $h$. In the momentum space,
it can be written more explicitly as
\begin{align}
h=\left\{ u\in l^2(\Integer^d)\,:\, \int_{[-\pi,\pi]^d}\frac{dp}{(2\pi)^d}\,
\frac{|(\mathcal{F}u)(p)|^2}{\square(p)} < \infty\right\}.  
\end{align}
In the following, we use the real version of this Hilbert space
\begin{align}\label{h real ft}
h_\Real=\left\{ u\in l^2(\Integer^d;\Real)\,:\, \int_{[-\pi,\pi]^d}\frac{dp}{(2\pi)^d}\,
\frac{|(\mathcal{F}u)(p)|^2}{\square(p)} < \infty\right\}.  
\end{align}
Note that $\square$ and $\square^{-1}$ can be considered to be operators in $h_\Real$.  

The lattice overlap boson field is a Gaussian random process 
$\{\phi(u)\}_{u\in h_\Real}$ labeled by $h_\Real$, that is, $\{\phi(u)\}_{u\in h_\Real}$
is a family of random variables on some probability space $(\Omega,\Sigma,\mu)$ 
satisfying
\begin{itemize}
\item[(i)] $u\mapsto \phi(u)$ is $\Real$-linear almost everywhere in $\mu$, that is,
\begin{align}
\phi(\alpha u+\beta v)=\alpha\phi(u)+\beta\phi(v),\quad\mu\text{-a.e.}
\quad \alpha,\beta\in\Real,\,u,v\in h_\Real.
\end{align}
\item[(ii)] $\{\phi(u)\}_{u\in h_\Real}$ is full, that is, $\Sigma$ is generated by $\{\phi(u)\}_{u\in h_\Real}$.
\item[(iii)] For all $k\in\Real$,
\begin{align}\label{eq: GRP cha func}
\int_\Omega {\rm e}^{ik\phi(u)}\,d\mu={\rm e}^{-||u||^2 k^2/2}.
\end{align}
\end{itemize}

From Minlos's theorem for $\mathscr{S}_\Real'(\Integer^d)$ (see Ref. \cite{Ezawa1988xx}, Theorem 8.5.3
or Ref. \cite{Simon1979} Theorem 2.2) 
the topological dual space of rapidly decreasing real-valued function on $\Integer^d$,
the probability space
$(\Omega,\Sigma,\mu)$ realizing this Gaussian random process can be chosen so that
\begin{align}
\Omega=\mathscr{S}_\Real'(\Integer^d),
\end{align}
and $\Sigma$ is the $\sigma$ algebra generated by the cylinder sets.
Define for $x\in\Integer^d$
\begin{align}
e_x(y):=\delta_{xy},\quad y\in\Integer^d.
\end{align}
Clearly, $e_x \in\mathscr{S}_\Real(\Integer^d)$ for all $x\in\Integer^d$, and
$\{e_x\}_{x\in\Integer^d}$ forms a complete orthonormal system in $l^2(\Integer^d$).
Its Fourier transformation is
\begin{align}
(\mathcal{F}e_x)(p)={\rm e}^{-ip\cdot x}.
\end{align}
Suppose $d\ge 3$, and then by Proposition \ref{prop: inverse is L^1}, $e_x\in h_\Real$. 
Thus, in this case, 
a generator $\phi(x)$ of $\mathfrak{A}_L$
can be regarded as a random variable on $(\mathscr{S'(\Integer^d)},\Sigma,\mu)$ by
\begin{align}\label{field as grv}
\phi(x)(T)=T(e_x),\quad T\in\mathscr{S}'(\Integer^d).
\end{align}
For all local polynomial $a\in\mathfrak{A}_L$, we regard $a$ as a random variable
through the relation \eqref{field as grv}. 

The expectation value of the overlap boson system is defined as
\begin{align}
\Expect a := \int_{\mathscr{S}'(\Integer^d)} a(T)\,d\mu(T),\quad a\in\mathfrak{A}_L,
\end{align}
so that
\begin{align}
\Expect{\phi(x_1)\dots\phi(x_n)}:=\int_{\mathscr{S}'(\Integer^d)} \phi(x_1)\dots\phi(x_n)\,d\mu.
\end{align}
Clearly, we have
\begin{align}
\Expect 1 =1.
\end{align}

The two point correlation of Gaussian random process $\Expect{\phi(u)\phi(v)}$ is 
given by the inner product of $u$ and $v$, since,
by \eqref{eq: GRP cha func}, 
\begin{align}
\Expect{\phi(u)\phi(v)}&=\int_{\mathscr{S}_\Real'(\Integer^d)} \phi(u)\phi(v)\,d\mu \no\\
	&=\frac{1}{i^2}\int_{\mathscr{S}_\Real'(\Integer^d)}
	 \frac{\partial^2}{\partial k\partial l} {\rm e}^{ik\phi(u)+il\phi(v)}d\mu\Big|_{k=l=0} \no\\
	&=\frac{1}{i^2}\frac{\partial^2}{\partial k\partial l}\int_{\mathscr{S}_\Real'(\Integer^d)}
	 {\rm e}^{ik\phi(u)+il\phi(v)} d\mu\Big|_{k=l=0}\no\\
	&=\frac{1}{i^2}\frac{\partial^2}{\partial k\partial l} {\rm e}^{-k^2||u||^2/2 -l^2 ||v||^2/2 -kl(u,v)}\Big|_{k=l=0}\no \\
	&=(u,v)_{h_\Real},
\end{align}
and, in particular, the two point function is given by
\begin{align}
\Expect{\phi(x)\phi(y)}=(e_x,e_y)_{h_\Real}&=(e_x, \square^{-1}e_y)_{l^2(\Integer^d)}\no\\
	&=:\square^{-1}(x,y).
\end{align}
The essential property of the Gaussian random process 
is that the $n$ point correlation functions are completely determined by
two point correlation (Wick's theorem) by
\begin{align}\label{eq: Wick}
\Expect{\phi(u_1)\dots\phi(u_{n})}=
\begin{cases} 0\quad (n=2m+1) \\
\sum_{\text{comb}}\Expect{\phi(u_{j_1})\phi(u_{k_1})}\dots\Expect{\phi(u_{j_m})\phi(u_{k_m})} \quad (n=2m),
\end{cases}		
\end{align}
where $\sum_{\text{comb}}$ means the summation over all $\{j_1,k_1,\dots,j_m,k_m\}$'s
with
\begin{align}
\{1,2,\dots,2m\}=\{j_1,k_1,\dots,j_m,k_m\},\no\\
1\le j_1<\dots < j_m,\quad j_1<k_1,\dots,j_m<k_m,\no
\end{align}
summation of $(2m)!/2^m m!$ terms. Especially, 
\begin{align}\label{eq: Wick special}
\Expect{\phi(x_1)\dots\phi(x_{n})}=
\begin{cases} 0\quad (n=2m+1) \\
\sum_{\text{comb}}\square^{-1}(x_{j_1},x_{k_1})\dots\square^{-1}(x_{j_m},x_{k_m}) \quad (n=2m).
\end{cases}		
\end{align}
 
We call the lattice system $(\mathfrak{A}_L,\Expect\cdot)$ defined above a free overlap boson system.

\subsection{Violation of reflection positivity}
We discuss the quantum mechanical properties of the overlap boson system along the discussion in 
section \ref{sec: lattice QM}.
\begin{Thm}\label{thm: free overlap boson properties}
A free overlap boson system satisfies the condition \textbf{(A1)},\textbf{(A2)},\textbf{(A4)} if $d\ge3$.
\end{Thm}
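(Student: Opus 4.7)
The plan is to reduce all three properties to four elementary facts about the two-point function $G(x,y) := \square^{-1}(x,y)$ and then invoke Wick's formula \eqref{eq: Wick special}. The facts I need are: (i) $G$ is real-valued; (ii) translation invariance, $G(x+z,y+z) = G(x,y)$ for every $z \in \Integer^d$; (iii) link-reflection symmetry, $G(\theta x, \theta y) = G(x,y)$ with $\theta(t,\mathbf{x}) = (-t+1,\mathbf{x})$; and (iv) $\sup_{x,y \in \Integer^d}|G(x,y)| < \infty$. Facts (i)--(iii) follow from the observation that the symbol $\square(p)$ in \eqref{laplacian kernel} is real-valued and even in each component $p_\mu$, so the Fourier integral representation of $G$ is unchanged under any reflection $p_\mu \mapsto -p_\mu$ and under simultaneous translations of $x$ and $y$. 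Fact (iv) is where the hypothesis $d \ge 3$ enters: one has the trivial bound $|G(x,y)| \le \int_{[-\pi,\pi]^d} \square(p)^{-1}\,dp/(2\pi)^d$, and the right-hand side is finite precisely by Proposition \ref{prop: inverse is L^1}.

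Given these, (A2) is immediate: since the $\phi(x)$ are real-valued random variables, so $\phi(x)^* = \phi(x)$ $\mu$-a.s., it suffices to treat monomials $a = \phi(x_1)\cdots\phi(x_n)$, and Wick's formula expresses $\Expect{\tau_y(a)}$ as a sum over pairings of products $G(x_i + y, x_j + y)$; by (ii), each term equals the corresponding term for $\Expect{a}$. For (A1), Wick's formula together with (i) shows that $\Expect{\phi(x_1)\cdots\phi(x_n)}$ is real for every monomial, so $\overline{\Expect{a}}$ is obtained from $\Expect{a}$ simply by complex-conjugating the scalar coefficients. On the other hand, $\theta$ acts by conjugating coefficients and reversing the order of factors, and the reversal is harmless because the random variables $\phi(x)$ commute pointwise; a second application of Wick's formula, combined with (iii) applied term by term, then gives $\Expect{\theta(a)} = \overline{\Expect{a}}$.

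For (A4), fix $a,b \in \mathfrak{A}_L$ and reduce to the case of monomials. Wick's theorem writes $\Expect{b\,\tau_x(a)}$ as a finite sum, of cardinality depending only on $\deg a + \deg b$, of products of factors $G(\cdot,\cdot)$ whose arguments are drawn from $\supp b \cup (\supp a + x)$. By (iv) each such factor is bounded by $\|G\|_\infty$, so
\begin{align}
|\Expect{b\,\tau_x(a)}| \le C_{a,b}
\end{align}
uniformly in $x$, which is strictly stronger than the polynomial bound (take $n=0$).

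The only place where genuine analysis is required is fact (iv), i.e.\ the $L^1$-integrability of $1/\square(p)$ on the Brillouin zone; this is precisely Proposition \ref{prop: inverse is L^1} and is exactly where the dimension restriction $d\ge 3$ enters. Everything else is a symmetry check on $\square(p)$ followed by term-by-term application of Wick's formula, so I do not anticipate any obstacle beyond what is already in hand.
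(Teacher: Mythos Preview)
Your proposal is correct and takes essentially the same approach as the paper: reduce each axiom to a property of the two-point function via Wick's theorem \eqref{eq: Wick special}, and verify those properties from the evenness and real-valuedness of the symbol $\square(p)$ together with Proposition~\ref{prop: inverse is L^1} for the bound. The paper's presentation interleaves the checks with each axiom (and for \textbf{(A1)} shows $\Expect{\theta(\phi(x)\phi(y))}=\overline{\Expect{\phi(x)\phi(y)}}$ directly by complex-conjugating the Fourier integral and using $\square(p_0,\bvec p)=\square(p_0,-\bvec p)$), whereas you first isolate the four facts (i)--(iv) and then assemble; the mathematical content is the same.
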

\begin{proof}
\textbf{(A2)} : Thanks to \eqref{eq: Wick} and \eqref{eq: Wick special}, it suffices to show that
$\square^{-1}(x,y)$ depends only upon $x-y$. This follows 
by the following simple computation
\begin{align}\label{eq: FT of inverse square}
\square^{-1}(x,y)&=\int_{[-\pi,\pi]^d} \frac{dp}{(2\pi)^d}\,\frac{e^{ip\cdot x}{\rm e}^{-ip\cdot y}}{\square(p)}\no\\
	&=\int_{[-\pi,\pi]^d} \frac{dp}{(2\pi)^d}\,\frac{e^{ip\cdot (x-y)}}{\square(p)}.
\end{align} 
Considering this result, we denote $\square^{-1}(x,y)$ also by $\square^{-1}(x-y)$. 

\textbf{(A1)} : Again, from \eqref{eq: Wick} and \eqref{eq: Wick special}, it suffices to show
\begin{align}
\Expect{\theta(\phi(x)\phi(y))}=\overline{\Expect{\phi(x)\phi(y)}}.
\end{align}
But this follows since
\begin{align}
\Expect{\theta(\phi(x)\phi(y))}&=\square^{-1}(\theta x-\theta y)\no\\
	&=\int_{[-\pi,\pi]^d} \frac{dp}{(2\pi)^d}\,\frac{e^{ip\cdot (\theta x-\theta y)}}{\square(p)}\no\\
	&=\int_{[-\pi,\pi]^d} \frac{dp}{(2\pi)^d}\,\frac{e^{ip\cdot (-x_0+y_0)}\,e^{i\bvec p\cdot(\bvec x-\bvec y)}}{\square(p)}\no\\
	&=\int_{[-\pi,\pi]^d} \frac{dp}{(2\pi)^d}\,\frac{{\rm e}^{-ip\cdot (x_0-y_0)}\,{\rm e}^{-i\bvec p\cdot(\bvec x-\bvec y)}}{\square(p_0,-\bvec p)}\no\\
	&=\int_{[-\pi,\pi]^d} \frac{dp}{(2\pi)^d}\,\frac{{\rm e}^{-ip\cdot (x_0-y_0)}\,{\rm e}^{-i\bvec p\cdot(\bvec x-\bvec y)}}{\square(p_0,\bvec p)}\no\\
	&=\overline{\Expect{\phi(x)\phi(y)}},
\end{align}
where we have used $\square(p_0,\bvec p)=\square(p_0,-\bvec p)$.

\textbf{(A4)} : Since $d\ge 3$, $p\mapsto\square(p)^{-1}$ belongs to $L^1([-\pi,\pi])$ from
Proposition \ref{prop: inverse is L^1}. Therefore,
$\square^{-1}(x-y)$ is bounded from above by some constant.
Then, \textbf{(A4)} follows from \eqref{eq: Wick}, \eqref{eq: Wick special}. 
\end{proof}

\begin{Thm}
Let $d\ge 3$. The two point function of a lattice overlap boson system is given by
\begin{align}\label{eq: answer of spec of overlap boson}
\Expect{\phi(x)\phi(0)}\Big|_{x_0>0}=
\int_{[0,\infty)\times[-\pi,\pi]^{d-1}}{\rm e}^{-\lambda (x_0-1)-i\bvec p\cdot \bvec x}\,d\rho(\lambda,\bvec p),
\end{align}
with an $\Real$-valued Borel measure $\rho$ supported on $[0,\infty)\times[-\pi,\pi]^{d-1}$,
\begin{align}
d\rho(\lambda,\bvec p)&=2\pi \frac{a-2b\sqrt{a-b^2}}{\sqrt{a-b^2}\sqrt{a-b^2-1}}
\chi_\mathcal{S}(\bvec p)\delta(\lambda-E_0(\bvec p)) \e^{-\lambda}\frac{d\lambda\, d\bvec p}{\pi\,(2\pi)^{d-1}}\no\\
&\qquad\qquad
+\frac{(b-\cosh \lambda)\sqrt{2b\cosh \lambda-a}}{\cosh^2 \lambda-a+b^2}\chi_{[E_1(\bvec p),\infty)}(\lambda)
 \e^{-\lambda}\frac{d \lambda\,d\bvec p}{\pi\,(2\pi)^{d-1}}.
\end{align}
Here, $\mathcal{S}$ is defined by
\begin{align}
\mathcal{S}:=\left\{ \bvec p \in [-\pi,\pi]^{d-1}\,:\, 1+\sum_{k=1}^{d-1}\sin^2 p_k - b(\bvec p)^2 \ge 0 \right\}
\end{align}
and $E_1(\bvec p)$ is defined by
\begin{align}
\cosh E_1(\bvec p) = \frac{a(\bvec p)}{2b(\bvec p)}, \quad E_1(\bvec p)\ge 0,
\end{align}
for $\bvec p\in [-\pi,\pi]^{d-1}\setminus\{0\}$.
\end{Thm}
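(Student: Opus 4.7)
The plan is to compute the two-point function $\Expect{\phi(x)\phi(0)} = \square^{-1}(x)$ directly from its Fourier representation \eqref{eq: FT of inverse square} by performing the $p_0$-integration explicitly via complex analysis, and then to read off the measure $\rho$ in the Laplace-Fourier form \eqref{eq: answer of spec of overlap boson}.

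First I rationalize the integrand. Multiplying numerator and denominator of $1/\square(p)$ by $\sqrt{a-2b\cos p_0}-(b-\cos p_0)$, one obtains
\begin{align}
\frac{1}{\square(p_0,\bvec p)} = \frac{(a-2b\cos p_0)-(b-\cos p_0)\sqrt{a-2b\cos p_0}}{a-b^2-\cos^2 p_0}.
\end{align}
Next, for fixed $\bvec p$, I substitute $z=e^{ip_0}$ so that the $p_0$-integral becomes a contour integral on the unit circle $|z|=1$ with $\cos p_0=(z+z^{-1})/2$ and $dp_0 = dz/(iz)$. The factor $e^{ip_0 x_0}$ becomes $z^{x_0}$, which, for $x_0>0$, decays as $|z|\to 0$ and grows as $|z|\to\infty$. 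To use the asymptotics favorably I deform the contour toward the origin (picking up poles of the integrand inside the unit disk), or equivalently parametrize the integrand in terms of the variable $\lambda$ defined by $z = e^{-\lambda}$, i.e.\ $\cos p_0 = \cosh\lambda$ under analytic continuation $p_0 = i\lambda$.

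The rational part has poles at $\cos p_0 = \pm\sqrt{a-b^2}$, which correspond in $z$ to the two reciprocal roots of $z^2 - 2\sqrt{a-b^2}\,z + 1 = 0$; when $a-b^2\ge1$ (which holds on $\mathcal{S}$, using $a-b^2=1+\sum_k\sin^2 p_k$ and $a-2b^2\ge 0$), the relevant pole lies inside the unit disk and produces a single exponential $e^{-E_0(\bvec p)(x_0-1)}$ with $\cosh E_0(\bvec p)=\sqrt{a-b^2}$. Computing the residue, being careful about the branch of $\sqrt{a-2b\cos p_0}$ evaluated at the pole (where it equals $\sqrt{a-2b\sqrt{a-b^2}} = |a-2b^2|/\sqrt{a-b^2}$\,), yields precisely the delta-function term supported on $\mathcal{S}$ with the weight $\frac{a-2b\sqrt{a-b^2}}{\sqrt{a-b^2}\sqrt{a-b^2-1}}$ stated in the theorem (the extra factor $e^{-\lambda}$ reflecting the exponent $x_0-1$ rather than $x_0$). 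The second part, containing $\sqrt{a-2b\cos p_0}$, has branch points at $\cos p_0 = a/(2b)$, i.e.\ at $z = e^{\pm E_1(\bvec p)}$ with $\cosh E_1(\bvec p) = a/(2b)$. Deforming the contour around the branch cut on the real axis in $(0,e^{-E_1}]$ and parametrizing $z=e^{-\lambda}$ for $\lambda\in[E_1(\bvec p),\infty)$, the discontinuity of the square root across the cut gives $2i\sqrt{2b\cosh\lambda-a}$, which, combined with the rational prefactor $(b-\cosh\lambda)/(\cosh^2\lambda-a+b^2)$, yields the continuous spectral density written in the theorem.

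The main technical obstacle is the bookkeeping of branches and signs: one must verify that the branch of $\sqrt{a-2b\cos p_0}$ originally chosen on the unit circle is real and positive there (guaranteed by $a-2b>0$ proved in Proposition \ref{prop: bdd and injective}), continues to a specific branch on the real $z$-axis outside, and that the sign of the residue is correct so that the resulting measure is real. One also needs to check that the rational-part pole is genuine, i.e.\ that the numerator in the rationalized expression does not vanish at $\cos p_0 = \sqrt{a-b^2}$; this reduces to the inequality $a-2b^2\ge0$ characterizing $\mathcal{S}$. Finally, for $\bvec p\notin\mathcal{S}$ the pole migrates onto the unit circle itself and does not contribute a residue, so the delta-function piece is correctly truncated by $\chi_{\mathcal{S}}$. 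Uniqueness of the representation \eqref{eq: answer of spec of overlap boson} then follows from Proposition \ref{prop: uniqueness} applied with $x_0-1$ in place of $2m$, so the computed $\rho$ is the only measure with these moments.
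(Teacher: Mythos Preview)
Your strategy is the paper's: perform the $p_0$-integral by contour deformation, collecting a pole (the $\delta$-term) and a branch-cut discontinuity (the continuous density). The paper works directly in the complex $p_0$-plane, deforming $[-\pi,\pi]$ into the upper half-strip with a keyhole around the cut on the positive imaginary axis; your substitution $z=e^{ip_0}$ and preliminary rationalization are equivalent bookkeeping that maps the same picture into the unit disk.

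A few slips in the sketch deserve correction. Your evaluation $\sqrt{a-2b\sqrt{a-b^2}}=|a-2b^2|/\sqrt{a-b^2}$ is wrong: since $a-2b\sqrt{a-b^2}=(b-\sqrt{a-b^2})^2$, the square root is $|b-\sqrt{a-b^2}|$, and this is what feeds into the residue. Your account of why $\chi_{\mathcal S}$ appears is also off. The inequality $a-b^2\ge 1$ holds for \emph{every} $\bvec p$ (it equals $1+\sum_k\sin^2 p_k$), so the candidate pole at $\cos p_0=\sqrt{a-b^2}$, i.e.\ $z=e^{-E_0}$, always sits inside the closed disk; for $\bvec p\notin\mathcal S$ it does not ``migrate onto the unit circle'' but is simply cancelled by a zero of the numerator in your rationalized expression---exactly the mechanism you identify two sentences earlier. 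The paper instead avoids rationalizing and tracks directly when the original denominator $g(z)=\sqrt{a-2b\cos z}+b-\cos z$ vanishes, which forces $\cos z=\sqrt{a-b^2}$ together with $b\le\sqrt{a-b^2}$, i.e.\ $\bvec p\in\mathcal S$; this makes the role of $\mathcal S$ transparent. Finally, the theorem as stated does not assert uniqueness of $\rho$, so the appeal to Proposition~\ref{prop: uniqueness} is unnecessary here.
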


\begin{proof}
From \eqref{eq: FT of inverse square}, we obtain by Fubini's theorem,
\begin{align}\label{propagator}
\Expect{\phi(x)\phi(y)}&=\int\frac{dp}{(2\pi)^d}\frac{{\rm e}^{ip(x-y)}}{\square(p)}\no\\
&=\int\frac{d\bvec p}{(2\pi)^{d-1}}{\rm e}^{i\bvec p (\bvec x-\bvec y)}\int\frac{dp_0}{2\pi}
\frac{{\rm e}^{ip_0 (x_0-y_0)}}{\Box(p_0,\bvec p)}.
\end{align}
In the following analysis, we will apply Cauchy's theorem to the $p_0$ integration
\begin{align}
\int\frac{dp_0}{2\pi}
\frac{{\rm e}^{ip_0 (x_0-y_0)}}{\Box(p_0,\bvec p)}.
\end{align}
Note that although the integration \eqref{propagator}
exists since $d\ge 3$, the integrand function of $\bvec p$ in \eqref{propagator}:
\begin{align}\label{p_0 func}
\bvec p \mapsto \int\frac{dp_0}{2\pi}
\frac{{\rm e}^{ip_0 (x_0-y_0)}}{\Box(p_0,\bvec p)}
\end{align}
is not defined at $\bvec p =0$, because at $\bvec p=0$ this $p_0$ integration does not exist.
However, since $\{ \bvec p\in [-\pi,\pi]^{d-1} \,:\, \bvec p =0 \}$ is zero measure with respect to $d\bvec p$ and
does not contribute the $\bvec p$ integration, we may assume $\bvec p\not=0$ in \eqref{p_0 func} when applying
 Cauchy's theorem.

Fix $\bvec p\in [-\pi,\pi]^{d-1}\setminus\{0\}$ and define a complex function $z\mapsto f(z)$ as
\begin{align}
f(z)=\frac{1}{\Box(p_0=z,\bvec p)}.
\end{align}
From \eqref{laplacian kernel}, one finds
\begin{align}
f(z)=\frac{\sqrt{a-2b\cos z}}{\sqrt{a-2b\cos z}+b-\cos z},
\end{align}
where, for notational simplicity, we have written $a=a(\bvec p)$ and $b=b(\bvec p)$ again. 
Here, we have to clarify the meaning of the square root of complex variables.
We define the square root of 
\begin{align}
z=|z| {\rm e}^{i\theta}, \quad\theta\in(-\pi,\pi),
\end{align}
as 
\begin{align}
\sqrt{z}:=\sqrt{|z|} {\rm e}^{i\theta/2},\quad \theta\in(-\pi,\pi).
\end{align}
Namely, we choose the branch where ${\rm Re}\,\sqrt{z}\ge 0$. 

We will investigate the analytic structure of $f$.

First, since the square root function is not continuous when the argument varies across the negative real axis, 
$f$ is not analytic where
\begin{align}\label{branch cut}
a-2b\cos z <0.
\end{align}
To find more explicit condition which is equivalent to \eqref{branch cut},
put $z=x+iy\,(x,y\in\Real)$. Since $\cos(x+iy)=\cos x\cosh y-i\sin x\sinh y$, 
\eqref{branch cut} is equivalent to
\begin{align}
\cos x\cosh y-i\sin x\sinh y>\frac{a}{2b},
\end{align}
which holds true when and only when
\begin{align}\label{branch cut 2}
\cos x\cosh y>\frac{a}{2b}, \quad \sin x\sinh y=0.
\end{align}
The second condition is equivalent to 
\begin{align}
x=n\pi\;(n\in\Integer)\quad\text{or} \quad y=0, 
\end{align}
but the second choice $y=0$ is impossible because
in this case the first condition of \eqref{branch cut 2}
becomes
\begin{align}
\cos x > \frac{a}{2b} \ge 1,
\end{align}
which is never true for real $x$. Therefore, \eqref{branch cut 2}
is equivalent to $x=n\pi$, $n\in\Integer$, and
\begin{align}
\cos (n\pi) \cosh y = (-1)^n \cosh y >\frac{a}{2b},\quad (n\in\Integer).
\end{align}
Hence the condition \eqref{branch cut} occurs when and only when
\begin{align}
x=2n\pi\;(n\in\Integer),
\end{align}
and
\begin{align}
y<-E_1\quad \text{or}\quad E_1<y.
\end{align}

Next, we investigate pole type singularities of $f$ which may appear where its
denominator
\begin{align}
g(z):=\sqrt{a-2b\cos z}+b-\cos z
\end{align}
vanishes. To find a \textit{necessary} condition for $g(z)=0$, let us assume $g(z)=0$.
Then, by taking the square of the both sides of
\begin{align}
\sqrt{a-2b\cos z}=-b+\cos z,
\end{align}
one finds
\begin{align}
\cos^2 z=a-b^2.
\end{align}
Using the identity $\cos^2 z=(1+\cos 2z)/2$ and putting $z=x+iy \,(x,y\in\Real)$ again, one arrives at
\begin{align}\label{cos cosh}
\cos 2x\cosh 2y = 2a-2b^2-1
\end{align} 
and 
\begin{align}\label{sin sinh}
\sin 2x\sinh 2y =0.
\end{align}
Eq. \eqref{sin sinh} implies 
\begin{align}\label{cases}
y=0 \quad\text{or}\quad 2x=n\pi\;(n\in\Integer),
\end{align}
and we will consider both cases respectively.

In the first case, $y=0$, \eqref{cos cosh} becomes
\begin{align}\label{cos 2x}
\cos 2x =2a-2b^2-1 \ge 1,
\end{align}
which is valid only when $2a-2b^2-1=1$, or equivalently,
\begin{align}
\sum_{k=1}^{d-1} \sin^2 p_k =0.
\end{align}
Therefore, this case $y=0$ occurs only when the spacial momentum satisfies
\begin{align}\label{cond for spacial p}
\bvec p=(m_1\pi,\dots,m_{d-1}\pi),\quad m_1,\dots,m_{d-1}\in\{0,1,-1\}.
\end{align} 
Since $\bvec p\not=0$, at least one of the
$m_k$'s ($k=1,\dots,d-1$) should be non-zero.
If \eqref{cond for spacial p} is satisfied, the right hand side of \eqref{cos 2x} becomes $1$ and 
\eqref{cos 2x} implies
\begin{align}
x=n\pi,\quad n\in\Integer.
\end{align}
But, this condition, $y=0$ and $x=n\pi\,(n\in\Integer)$, is not sufficient for $g(z)=0$. In fact, for $n\in\Integer$,
\begin{align}
g(n\pi)&=\sqrt{a-2(-1)^n b}+b-(-1)^n \no\\
 &=|b-(-1)^n|+b-(-1)^n \no\\
 &\ge2
\end{align} 
because
\begin{align}
b=b(\bvec p)=\sum_{k=1}^{d-1}(1-\cos p_k)\ge 2,
\end{align}
due to the fact that at least one of $\cos p_k$'s is equal to $-1$.

In the second case of \eqref{cases}, $2x=n\pi\;(n\in\Integer)$, \eqref{cos cosh}
becomes
\begin{align}
(-1)^n\cosh 2y = 2a-2b^2-1.
\end{align}
And then, this implies that $n$ should be even and 
\begin{align}
\cosh 2y =2a-2b^2-1.
\end{align}
Define $E_0>0$ by\footnote{
Since we assume $\bvec p\not=0$, $\sqrt{a-b^2}$ is less than $1$,
 so that $E_0>0$.
} 
\begin{align}
E_0 &= \frac{1}{2} \cosh^{-1}(2a-2b^2-1)\no\\
&=\cosh^{-1}\sqrt{a-b^2},
\end{align}
and we obtain as a necessary condition for $g(z)=0$, $z=n\pi\pm iE_0\;(n\in\Integer)$.
To find  a sufficient condition for $g(z)=0$, let us assume $z=n\pi\pm iE_0\;(n\in\Integer)$
conversely. Then, we have
\begin{align}
g(n\pi\pm iE_0)&=\sqrt{a-2b(-1)^n\cosh E_0}+b-\cosh E_0\no\\
&=|b-(-1)^n\sqrt{a-b^2}|+ b-\sqrt{a-b^2}\no\\
&=\begin{cases}
0 \quad (\text{if}\;\; b-(-1)^n\sqrt{a-b^2}\le 0)\\
\\
2b\quad (\text{if}\;\;b-(-1)^n\sqrt{a-b^2}\ge 0 )
\end{cases}.
\end{align}
Hence, the necessary and sufficient condition for 
$g(n\pi\pm iE_0)=0$ is that $n$ should be even and
\begin{align}
b-\sqrt{a-b^2}\le 0,
\end{align}
which is equivalent to
\begin{align}\label{S cond}
1+\sum_{k=1}^{d-1} \sin^2 p_k -b(\bvec p)^2\ge 0,
\end{align}
namely, $\bvec p\in\mathcal{S}$.

We now have found all the zeros of the function $g$ : 
\begin{align}
z=2n\pi \pm iE_0, \quad (n\in\Integer, \,\bvec p \in \mathcal{S}).
\end{align}

For a moment, let us assume that spacial momentum $\bvec p$ satisfies
\begin{align}
b-\sqrt{a-b^2}< 0.
\end{align}
In this case, $z=z_n^\pm:=2n\pi \pm iE_0$ is
a simple pole of $f$, as will be seen. From the above argument, $f$ is 
analytic on 
\begin{align}
\Complex\setminus\bigcup_{n\in\Integer}\Big(\{z_n^\pm\}\cup\{2n\pi+iy\,;\,y<-E_1,\,E_1<y\}\Big).
\end{align}
Expand $g$ in Taylor series around $z_n^\pm$ :
\begin{align}
g(z)=g(z_n^\pm)+g'(z_n^\pm)(z-z_n^\pm)+\mathcal{O}\big({(z-z_n^\pm)}^2\big),
\end{align}
on $|z-z_n^\pm|<r$ for sufficiently small $r>0$, and we obtain
\begin{align}
f(z)&=\frac{\sqrt{a-2b\cos z}}{g(z_n^\pm)+g'(z_n^\pm)(z-z_n^\pm)+\mathcal{O}\big({(z-z_n^\pm)}^2\big)}\no\\
&=\frac{\sqrt{a-2b\cos z}}{g'(z_n^\pm)(z-z_n^\pm)+\mathcal{O}\big({(z-z_n^\pm)}^2\big)},
\end{align}
on $|z-z_n^\pm|<r$. Then, we find
\begin{align}
z\mapsto(z-z_n^\pm)f(z)=\frac{\sqrt{a-2b\cos z}}{g'(z_n^\pm)+\mathcal{O}\big(z-z_n^\pm\big)}
\end{align}
is analytic on $|z-z_n^\pm|<r$, and then $z=z_n^\pm$ are simple poles of $z\mapsto \e^{izx_0}f(z)$ with residues
\begin{align}\label{res}
\text{Res}(\e^{izx_0}f(z);z=z_n^\pm)&=(z-z_n^\pm)\e^{izx_0}f(z)\Big|_{z=z_n^\pm}\no\\
&=\frac{\sqrt{a-2b\cos (2n\pi \pm iE_0)}}{g'(2n\pi \pm iE_0)}\e^{\mp E_0 x_0}\no\\
&=\pm\frac{a-2b\sqrt{a-b^2}}{i\sqrt{a-b^2}\sqrt{a-b^2-1}}\e^{\mp E_0 x_0}.
\end{align}
\begin{figure}[!b]
\begin{center}
\includegraphics[scale=0.55,trim=100 200 100 250]{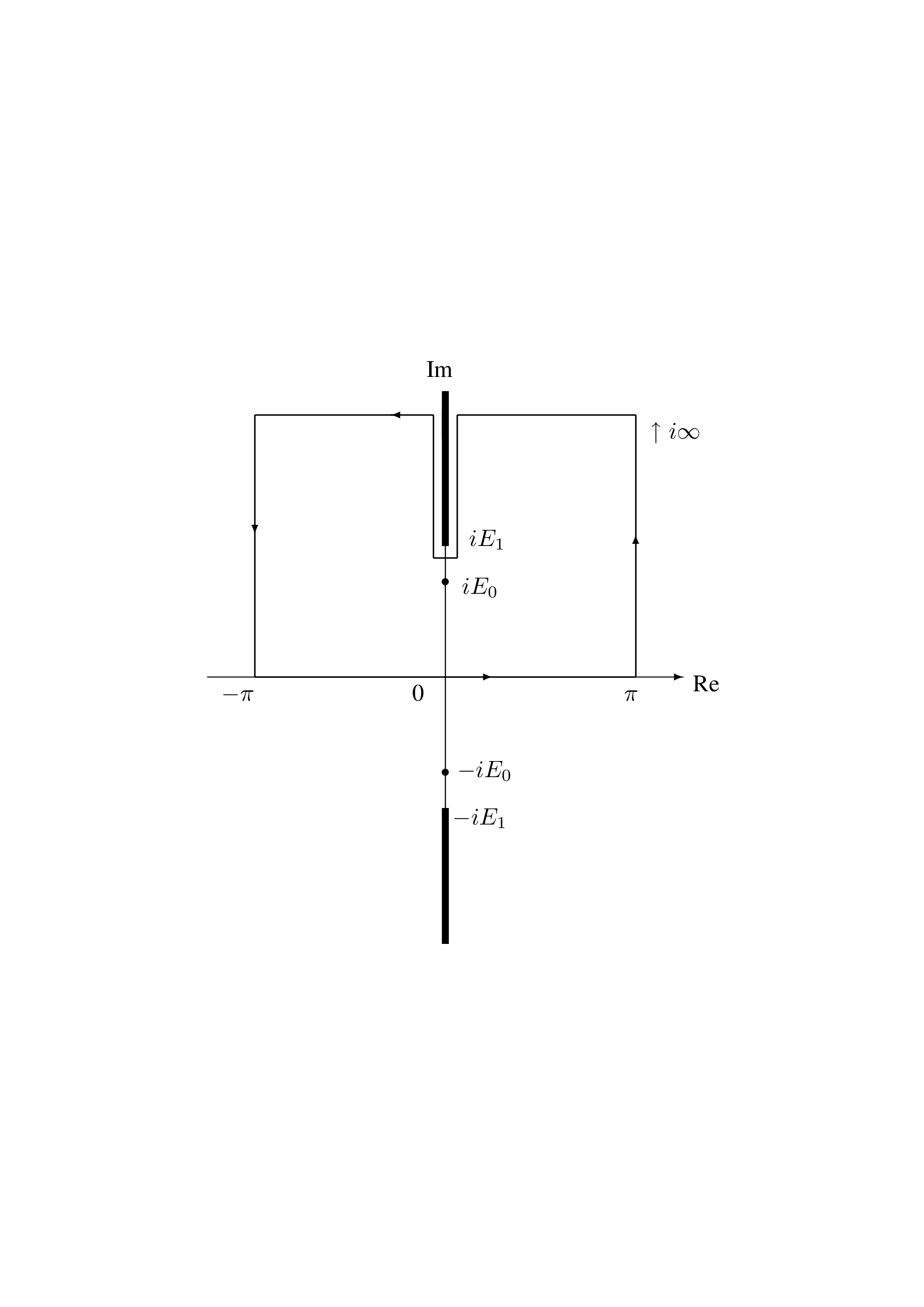} 
\caption{Integration contour.}
\label{fig:p0-integration-contour}
\end{center}
\end{figure}

Applying Cauchy's theorem on the contour drawn in Fig.\ref{fig:p0-integration-contour},
we obtain, for $x_0>0$,
\begin{align}\label{Cauchy}
&\int_{[-\pi,\pi]}\frac{dp_0}{2\pi}\,{\rm e}^{ip_0 x_0}
\frac{1}{\Box(p_0,\bvec p)}\no\\
=&
\int_{[-\pi,\pi]}\frac{dz}{2\pi}\,{\rm e}^{izx_0}f(z)\no\\
=&
2\pi i \,\text{Res}(\e^{izx_0}f(z);z=E_0)-\Bigg(\int_{i\infty+0}^{iE_1+0}+\int_{i\infty-0}^{iE_1-0}\Bigg)\frac{dz}{2\pi}\,{\rm e}^{iz x_0}f(z).
\end{align} 
Recalling our definition of the square root, one finds
\begin{align}
\sqrt{a-2b\cos(iE\pm 0)}=\pm i\sqrt{2b\cosh E-a}.
\end{align}
Then, the integrations of the second term in \eqref{Cauchy} is computed
by putting $z=i\lambda\pm 0$ to become
\begin{align}\label{edge int}
&\Bigg(\int_{i\infty+0}^{iE_1+0}+\int_{i\infty-0}^{iE_1-0}\Bigg){\rm e}^{-\lambda x_0}f(z)\no\\
=&\int_{E_1}^\infty\frac{i d\lambda}{2\pi}{\rm e}^{-\lambda x_0}\Big(f(i\lambda+0)+f(i\lambda-0)\Big)\no\\
=-&\int_{E_1}^\infty\frac{d\lambda}{\pi}{\rm e}^{-\lambda x_0}\frac{(b-\cosh \lambda)\sqrt{2b\cosh \lambda-a}}{\cosh^2 \lambda-a+b^2}.
\end{align}
By substituting \eqref{res} and \eqref{edge int} into \eqref{Cauchy}, we arrive at
\begin{align}\label{answer of spec}
\int_{[-\pi,\pi]}\frac{dp_0}{2\pi}\,
\frac{{\rm e}^{ip_0 x_0}}{\Box(p_0,\bvec p)}
&=2\pi \frac{a-2b\sqrt{a-b^2}}{\sqrt{a-b^2}\sqrt{a-b^2-1}}\e^{-E_0 x_0}\no\\
&\qquad+\int_{E_1}^\infty\frac{d\lambda}{\pi}{\rm e}^{-\lambda x_0}\frac{(b-\cosh \lambda)\sqrt{2b\cosh \lambda-a}}{\cosh^2 \lambda-a+b^2}.
\end{align}

Considering the case where spacial momentum $\bvec p$ satisfies
\begin{align}
b-\sqrt{a-b^2}\ge 0,
\end{align}
we find that there is no pole term and only the second term of \eqref{answer of spec} survives.
Note that, in the case of equality, even though $g(z)=0$, $f$ has no isolated pole.
In this case, the numerator of $f(z)$ also vanishes and $E_0=E_1$.

From \eqref{answer of spec}, we learn
\begin{align}
\Expect{\phi(x)\phi(0)}=\int_{[0,\infty)\times [-\pi,\pi]^{d-1}} {\rm e}^{-\lambda (x_0-1)-i\bvec p\cdot x}\,
d\rho(\lambda,\bvec p)
\end{align}
with
\begin{align}\label{eq: spectral representation}
d\rho(\lambda,\bvec p)&=2\pi \frac{a-2b\sqrt{a-b^2}}{\sqrt{a-b^2}\sqrt{a-b^2-1}}
\chi_\mathcal{S}(\bvec p)\delta(\lambda-E_0(\bvec p)) \e^{-\lambda} \frac{d\lambda\, d\bvec p}{\pi\,(2\pi)^{d-1}}\no\\
&\qquad\qquad
+\frac{(b-\cosh \lambda)\sqrt{2b\cosh \lambda-a}}{\cosh^2 \lambda-a+b^2}\chi_{[E_1(\bvec p),\infty)}(\lambda)
 \e^{-\lambda}\frac{d \lambda\,d\bvec p}{\pi\,(2\pi)^{d-1}}.
\end{align}
\end{proof}

This expression is clearly the Umezawa-Kamefuchi-K\"all\'en-Lehmann representation of the propagator,
and the spectral density $\sigma$
of the overlap boson system is
given by
\begin{align}
\sigma(\lambda,\bvec p)=\frac{(b-\cosh \lambda)\sqrt{2b\cosh \lambda-a}}{\cosh^2 \lambda-a+b^2}\chi_{[E_1(\bvec p),\infty)}(\lambda) \e^{-\lambda}.
\end{align} 
The important observation is that this function $\sigma$ becomes negative on the set
\begin{align}
\{ (\lambda,\bvec p)\in [0,\infty)\times [-\pi,\pi]^{d-1} \,:\, b(\bvec p) - \cosh\lambda < 0,\quad \lambda\ge E_1(\bvec p)  \},
\end{align} 
which has a positive Lebesgue measure.
This means that the spectral density of the overlap boson system is not non-negative function. 
From Corollary \ref{cor: violation}, one concludes that overlap boson system 
violates at least on of the
conditions from \textbf{(A1)} to \textbf{(A4)}. But from Theorem \ref{thm: free overlap boson properties}, 
the only candidate is
\textbf{(A3)}, the reflection positivity condition. Thus, we finally arrive at
\begin{Thm}
The lattice overlap boson system violates the link reflection positivity condition \textbf{(A3)}.
\end{Thm}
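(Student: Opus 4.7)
The plan is to use the contrapositive via Theorem \ref{cor: violation}: since (A1), (A2), (A4) have already been verified for the overlap boson system in Theorem \ref{thm: free overlap boson properties}, only (A3) can fail, and the failure follows once we exhibit a spectral density that is negative on a set of positive Lebesgue measure. The spectral representation derived in \eqref{eq: spectral representation} already provides the required witness.

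First, I would isolate the absolutely continuous part of $\rho$ and read off the spectral density
\begin{align}
\sigma(\lambda,\bvec p) = \frac{(b(\bvec p)-\cosh\lambda)\sqrt{2b(\bvec p)\cosh\lambda-a(\bvec p)}}{\cosh^2\lambda - a(\bvec p)+b(\bvec p)^2}\,\chi_{[E_1(\bvec p),\infty)}(\lambda)\,\e^{-\lambda}.
\end{align}
Next, I would identify the region on which $\sigma<0$. For $\lambda\geq E_1(\bvec p)$ we have $2b\cosh\lambda - a \geq 0$ by the definition of $E_1$, so the square root is real and nonnegative; the denominator $\cosh^2\lambda - a + b^2$ is also positive in a suitable neighborhood, so the sign of $\sigma$ is controlled by the factor $b(\bvec p)-\cosh\lambda$. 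For any $\bvec p$ with $b(\bvec p)>0$ and any $\lambda$ strictly larger than $\cosh^{-1} b(\bvec p)$ (together with $\lambda\geq E_1(\bvec p)$), the factor $b(\bvec p)-\cosh\lambda$ is strictly negative, hence $\sigma(\lambda,\bvec p)<0$ there. Since this condition defines an open subset of $[0,\infty)\times[-\pi,\pi]^{d-1}$ it has positive $(d-1)+1$ dimensional Lebesgue measure.

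Finally, I would invoke Theorem \ref{cor: violation}: any lattice model whose spectral density is negative on a set of positive Lebesgue measure must violate at least one of (A1)--(A4). By Theorem \ref{thm: free overlap boson properties}, the overlap boson satisfies (A1), (A2), (A4) (for $d\geq 3$), so the only assumption left to violate is (A3), the link reflection positivity. This finishes the proof.

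The main obstacle — which is really only a bookkeeping check — is verifying that the set on which $\sigma<0$ genuinely has positive Lebesgue measure rather than being a lower-dimensional exceptional set. This reduces to showing that $\{\bvec p : b(\bvec p)>0\}$ is nonempty and open (immediate from the continuity of $b$ and $b(\bvec p)>0$ whenever some $p_k\neq 0$), and that for such $\bvec p$ the threshold $\max(E_1(\bvec p),\cosh^{-1}b(\bvec p))$ is finite so a genuine $\lambda$-interval contributes; both are direct from the explicit formulas \eqref{a}, \eqref{b} for $a$ and $b$. No further analytic input is needed beyond what the preceding theorem has already established.
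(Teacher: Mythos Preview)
Your proposal is correct and follows essentially the same approach as the paper: identify the absolutely continuous spectral density from \eqref{eq: spectral representation}, observe that the factor $b(\bvec p)-\cosh\lambda$ forces $\sigma<0$ on a set of positive Lebesgue measure, and then combine Theorem \ref{cor: violation} with Theorem \ref{thm: free overlap boson properties} to pin the violation on \textbf{(A3)}. Your additional remarks on why the negativity set is open and genuinely full-dimensional are a welcome bit of bookkeeping that the paper leaves implicit.
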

\section{Summary and conclusion}
We have proved that a lattice model which satisfies the assumptions \textbf{(A1)} - \textbf{(A4)}
permits the Euclidean version of Umezawa-Kamefuchi-K\"all\'en-Lehmann representation of 
two point correlation functions with a positive spectral density. This implies that
a lattice model with a spectral density function which is not positive definite violates at least one of
the assumptions \textbf{(A1)} - \textbf{(A4)}.

Lattice overlap boson, which plays an important role when formulating Wess-Zumino model
on the lattice with exact $U(1)_R$ symmetry, has a spectral density function which is not positive. 
Considering that overlap boson fulfills \textbf{(A1)}, \textbf{(A2)}, and \textbf{(A4)}, it follows that
overlap boson violates the condition \textbf{(A3)}, reflection positivity condition.
 
\section*{Ackonwledgements}
The author thanks Y. Kikukawa for fruitful discussions. This work is supported by
JSPS Research Fellowships for Young Scientists and 
by World Premier International Center Initiative (WPI Program), MEXT, Japan.

%% The Appendices part is started with the command \appendix;
%% appendix sections are then done as normal sections
%% \appendix

%% \section{}
%% \label{}

%% References
%%
%% Following citation commands can be used in the body text:
%% Usage of \cite is as follows:
%%   \cite{key}          ==>>  [#]
%%   \cite[chap. 2]{key} ==>>  [#, chap. 2]
%%   \citet{key}         ==>>  Author [#]

%% References with bibTeX database:

\bibliographystyle{model1-num-names}
\bibliography{myref}

%% Authors are advised to submit their bibtex database files. They are
%% requested to list a bibtex style file in the manuscript if they do
%% not want to use model1-num-names.bst.

%% References without bibTeX database:

% \begin{thebibliography}{00}

%% \bibitem must have the following form:
%%   \bibitem{key}...
%%

% \bibitem{}

% \end{thebibliography}

\end{document}